\newtheorem{definition}{\emph{\underline{Definition}}}
\newtheorem{lemma}{\emph{\underline{Lemma}}}
\newtheorem{proposition}{\emph{\underline{Proposition}}}
\newtheorem{example}{\bf \emph{\underline{Example}}}
\newtheorem{remark}{\bf \emph{\underline{Remark}}}
\def\({\left(}
\def\){\right)}
\def\b0{{\mathbf{0}}}
\g@addto@macro\normalsize{%
	\setlength\abovedisplayskip{6pt}  % 正文到公式上方的距离
	\setlength\belowdisplayskip{6pt}  % 正文到公式下方的距离
}
\title{Physical-Layer Security in Mixed Near-Field and Far-Field Communication Systems}
\author{Tianyu Liu, Changsheng You,~\IEEEmembership{Member,~IEEE}, Cong Zhou, Yunpu Zhang, Shiqi Gong, \\ Heng Liu, Guangchi Zhang
	\vspace{-20pt}
	\thanks{Tianyu Liu and Changsheng You are with the Department of Electronic and Electrical Engineering, Southern University of Science and Technology, Shenzhen 518055, China (e-mail: $\!$liuty2022@mail.sustech.edu.cn; youcs@sustech.edu.cn).}
	\thanks{Cong Zhou is with the School of Electronics and Information Engineering, Harbin Institute of Technology, Harbin, 150001, China (e-mail: zhoucong@stu.hit.edu.cn).}
	\thanks{Yunpu Zhang is with the Department of Electrical Engineering, City University of Hong Kong, Hong Kong (e-mail: yunpu.zhang@my.cityu.edu.hk).}
	\thanks{Shiqi Gong and Heng Liu are with the School of Information and Electronics, Beijing Institute of Technology, Beijing 100081, China (e-mail: gsqyx@163.com; heng$\_$liu$\_$bit$\_$ee@163.com).}
	\thanks{Guangchi Zhang is with the School of Information Engineering,
	Guangdong University of Technology, Guangzhou 510006, China (e-mail: gczhang@gdut.edu.cn).
	}
	\thanks{\emph{Corresponding author: Changsheng You.}
	}
	}
\begin{document}

\captionsetup[figure]{name={Fig.},labelsep=period,singlelinecheck=off} 

\maketitle

\begin{abstract}
\emph{Extremely large-scale arrays} (XL-arrays) have emerged as a promising technology to improve the spectrum efficiency and spatial resolution of future wireless systems. 
Different from existing works that mostly considered physical layer security (PLS) in either the far-field or near-field, we consider in this paper a new and practical scenario, where legitimate users (Bobs) are located in the far-field of  a base station (BS) while eavesdroppers (Eves) are located in the near-field for intercepting confidential information at short distance, referred to as the \emph{mixed near-field and far-field PLS}.
Specifically, we formulate an optimization problem to maximize the sum-secrecy-rate of all Bobs by optimizing the power allocation of the BS, subject to the constraint on the total BS transmit power. 
To shed useful insights, we first consider a one-Bob-one-Eve system and characterize the insecure-transmission region of the Bob in closed form. 
Interestingly, we show that the insecure-transmission region is significantly \emph{expanded} as compared to that in conventional far-field PLS systems, due to the \emph{energy-spread effect} in the mixed-field scenario.
Then, we further extend the analysis to a two-Bob-one-Eve system. It is revealed that as compared to the one-Bob system, the interferences from the other Bob can be effectively used to weaken the capability of Eve for intercepting signals of target Bobs, thus leading to enhanced secrecy rates.
Furthermore, we propose an efficient algorithm to obtain a high-quality solution to the formulated non-convex problem by leveraging the successive convex approximation (SCA) technique. 
Finally, numerical results demonstrate that our proposed algorithm achieves a higher sum-secrecy-rate than the benchmark scheme where the power allocation is designed based on the (simplified) far-field channel model.
%相较于短文，长文是不是要添加更多的重要性结论。
 
\end{abstract}
\begin{IEEEkeywords}
Physical layer security (PLS), extremely large-scale array (XL-array), mixed near- and far-field channels, power allocation.  
\end{IEEEkeywords}

\vspace{-10pt}
\section{Introduction}
Extremely large-scale arrays (XL-arrays) have emerged as a promising technology for future sixth-generation (6G) wireless systems to achieve ultra-high spectral efficiency and spatial resolution, thereby supporting emerging applications such as metaverse and digital twin \cite{Tian_Jiachen2023, Cui_Mingyao2023v10}.  With drastically growing number of antennas in high-frequency bands, the well-known \emph{Rayleigh distance} will be greatly expanded. 
This thus induces a fundamental paradigm shift in electromagnetic (EM) field characteristics, transitioning from conventional far-field communications to the emerging \emph{near-field communications} \cite{Liu_Yuanwei2025}. 
%复数名词前面有形容词的也加the

In particular, compared with far-field communication systems featured by planar wavefronts (PWs),  a more accurate spherical wavefronts (SWs) model needs to be considered in the channel modelling of near-field systems.
This new channel characteristic gives rise to an appealing feature,
called near-field \emph{beam-focusing}, where the beam energy can be concentrated at a specific spatial location rather than along a spatial angle as in far-field systems\cite{Zhang_Haiyang2022}.
This thus stimulates upsurging research on new near-field beam training techniques to achieve two-dimensional beam search over both the angle and distance domains \cite{Cui_Mingyao2022,Lu_Yu2024,Liu_Linyangside}.
In addition, the near-field energy-focusing effect provides a new degree-of-freedom (DoF) to mitigate inter-user interference in both the angle and distance domains, thereby achieving  the enhanced system performance \cite{Zhang_Haiyang2022}. 
While the existing works have mainly considered either the near-field or far-field communications, they overlooked a practical mixed near-field and far-field communications scenario where there may exist both near-field and far-field users. To characterize the difference between near-field and far-field regions from the perspective of array gains, a new metric called effective Rayleigh distance, was introduced in \cite{Cui_Mingyao2024}. For example, consider a wireless system where the base station (BS) operating at $f=30$ GHz is equipped with an uniform linear array (ULA) of 256 antennas. The corresponding effective Rayleigh distance is about 120 meters (m). Therefore, in practical cellular networks (e.g., with a cell radius of 200 m), users may reside in either the near-field or far-field region. For the mixed-field system, it was shown in \cite{YZhang2023} that when the XL-array steers information beams towards both the near-field and far-field users, near-field users may suffer strong interference from far-field oriented beams, even when the users are located at different angles. Furthermore, such power leakage from the far-field beam was leveraged in \cite{Zhang_Yunpu2024} to efficiently charge near-field energy harvesting receivers. 

In addition, physical layer security (PLS) is a promising approach to protect wireless communication. The existing works on PLS mainly considered the near-field or far-field systems. Specifically, for far-field PLS systems, various beamforming designs have been studied in the angular domain to enhance achievable system secrecy rates. One strategy is to design precoding at the transmitter to mitigate information leakage at Eves.
To this end, zero-forcing (ZF) precoding was applied in \cite{Li_Yiqing2017} to transmit confidential signals within the null space of eavesdroppers (Eves), such that the Eve can not overhear information intended to legitimate users (Bobs). 
Another strategy involves transmitting artificial jamming signals via jamming-aided beamforming to interfere with reception of Eves. 
Artificial noise (AN), as an effective jamming technique, was introduced in \cite{Zhao_Nan2019} to interfere with signal reception of Eves by precisely aligning AN-aided beams with the location of Eves, under the assumption of perfect channel state information (CSI) of Eves.
Furthermore, when the CSI of Eves is unavailable, AN was designed to be uniformly distributed on the null space of the channels of Bobs\cite{Zhou_Fuhui2018}. As such, only the information reception of Eves was degraded by the AN, while that of Bobs remained nearly unaffected.
However, when Bob and Eve are located at the same spatial angle, the aforementioned strategies may suffer from performance degradation due to the highly correlated channels between Bobs and Eves. To address this issue, intelligent reflecting surface (IRS) was employed in \cite{Cui_Miao2019} to enable secure transmission\cite{You_Changsheng2020, Shao_Xiaodan2022}. Specifically, by reconfiguring the wireless propagation environment through tunable reflecting elements to reduce received signal power at Eves, IRSs significantly enhanced the secrecy performance.
Besides, movable antennas (MAs), as a form of reconfigurable antennas, can also reconfigure the wireless channel by flexibly adjusting the positions of transmit/receive antennas within a specified region. This technique was employed in \cite{Hu_Guojie2024} to maximize the achievable secrecy rate by jointly optimizing the transmit beamforming and the positions of all antennas, leading to significant performance improvement thank to the additional spatial DoF.

For near-field PLS systems, interestingly, it was shown in \cite{Zhang_Zheng2024} that near-field beam focusing can be leveraged to protect data transmission security of Bobs, even when Eves were located at shorter distance than Bobs and/or they were at the same angle.
Specifically, it was revealed in \cite{Boqun_Zhao2024} that, in the near-field, Bob can achieve a non-zero secrecy-capacity even when Bob and Eve were located at the same angle owing to the energy-focusing effect. In contrast, the secrecy-capacity dropped to zero in the far-field under the same angular condition.
Moreover, a new concept of depth-of-insecurity was proposed in \cite{Boqun_Zhao2024} to evaluate secrecy performance, which was shown to decrease with an increasing number of transmit antennas. 
In addition, AN-based near-field PLS system was further studied in \cite{Yunpu_Zhang2024_secure}, where AN-aided beams were steered towards Eves to effectively interfere with information reception of Eves, while causing negligible impact on information reception of Bobs due to the energy-focusing effect. This resulted in a performance rate gain compared to the far-field counterpart, in which Bobs also suffered considerable interference from AN, especially when Bob and Eve were located at the same angle. 
Furthermore, a near-field wideband PLS communication system was studied in \cite{Zhang_Yuchen2024}, where the beam-split effect inherent in wideband transmissions leads to severe information leakage. To address this problem, a true-time delayers (TTDs)-based analog architecture was designed to eliminate the beam-split effect, thereby greatly enhancing the secrecy performance.
%----------------------------------------------------------------------------------

\subsection{Motivations and Contributions}
However, the existing results on far-field or near-field PSL systems may not be applicable to the practical mixed near-field and far-field systems, due to the intricate correlation between the near-field and far-field channels \cite{Cui_Mingyao2022_3}. 
Specifically, for one-Bob systems, when a BS sends information to a far-field Bob via angle-based beamforming (e.g., discrete Fourier transform (DFT) beam), it may cause information leakage to near-field Eves due to the energy-spread effect \cite{YZhang2023}, even when they are not located at the same angle. 
Next, for multi-Bob systems, far-field Bobs may introduce severe interference to the near-field Eve due to the energy-spread effect, hence potentially mitigating the eavesdropping ability of Eve on the target Bobs. 
These thus motivate us to study a new mixed-field PLS system for secrecy-rate maximization. 
In particular, we aim to answer the following three questions:
\begin{itemize}
\item First, what is the insecure-transmission region in the mixed-field PLS system, i.e., the region where the secrecy rate of Bobs drop to zero?
\item How to design the power allocation to maximize the sum-secrecy-rate in the mixed-field PLS system? 
\item Are there any gains or losses in the mixed-field secrecy-rate performance as compared to the far-field case?
\end{itemize}

To this end, we consider in this paper a new and challenging mixed-field PLS system shown in Fig. \ref{system_setting}, where a BS equipped with an XL-array transmits confidential messages to multiple far-field Bobs in the presence of multiple non-cooperative Eves located in its near-field region. 
We formulate an optimization problem to maximize the sum-secrecy-rate by optimizing the power allocation of the BS for multiple Bobs.
To shed useful insights into the new characteristics of mixed-field PLS system, we first consider a one-Bob-one-Eve system and characterize the \emph{insecure-transmission region} of the Bob in closed form.
Notably, we analytically show that the insecure-transmission region of Bob in the mixed-field is much \emph{larger} than that in far-field systems.
Then, we further extend the theoretical analysis to a two-Bob-one-Eve system. 
It is revealed that as compared to the one-Bob system, the interferences from other Bobs can be effectively used to mitigate the capability of Eve to intercept signals of target Bobs, leading to enhanced secrecy rate performance. 
Subsequently, an efficient algorithm is proposed to solve the formulated non-convex optimization problem for the general case of multiple Eves and Bobs, by leveraging the efficient successive convex approximation (SCA) technique. 
Finally, numerical results demonstrate that our proposed algorithm achieves a higher sum-secrecy-rate than the benchmark scheme where the power allocation is designed based on the (simplifed) far-field channel model. 
This highlights the importance of considering practical mixed-field channel models for PLS system designs.

\vspace{-10pt}
\subsection{Organization and Notations}
The remainder of this paper is organized as follows. Section II presents the system model and problem formulation for mixed-field PLS systems. In Section III,  we consider two special cases to gain useful insights into the new characteristics of the mixed-field PLS system. In Section IV, we propose an efficient algorithm to obtain a suboptimal solution for the general case. Numerical results are presented in Section V to evaluate the performance of the proposed scheme, followed by conclusions given in Section VI.

\emph{Notations:}  Lowercase and uppercase boldface letters are
used to represent vectors and matrices, such as $\mathbf{a}$ and $\mathbf{A}$. For vectors and matrices, the superscripts  $(\cdot)^H$ and $(\cdot)^T$ stand for the transpose and Hermitian transpose, respectively.
Additionally, calligraphic letters (e.g., $\mathcal{A}$) denote discrete and finite sets. The notations $|\cdot|$,  $\|\cdot\|$, and $\|\cdot\|_F$ refer to the absolute value of a scalar, the $\ell_2$ norm of  a vector, and the Frobenius norm of a matrix, respectively. The symbol $\mathbf{I}_K$ represents a $K$-dimensional identity matrix. $\mathcal{O}(\cdot)$ denotes the standard big-O notation.

\begin{figure}[t]
\centering
\includegraphics[width=3.5in]{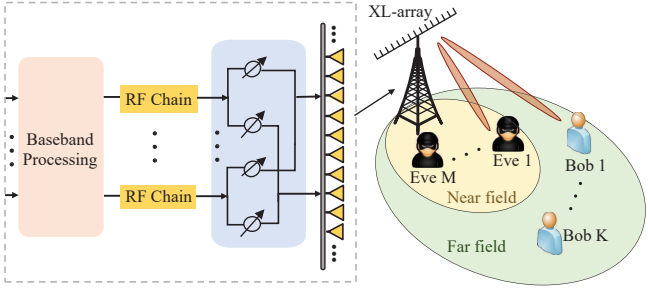}
%{system_model_mixed_field.eps}
\caption{The considered mixed-field PLS system.}
\label{system_setting}
\vspace{-10pt}
\end{figure}
\section{System Model and Problem Formulation}
\subsection{System Model}
We consider a downlink PLS communication system as shown in Fig. \ref{system_setting}, where a BS equipped with an $N$-antenna XL-array transmits confidential messages to $K$ single-antenna legitimate users (Bobs) in the presence of $M$ single-antenna eavesdroppers (Eves).  We assume that the XL-array BS is placed at the $y$-axis with its center at the origin $(0,0)$. As such, the coordinate of the $n$-th antenna is given by $(0, \delta_n d)$ with $\delta_n=\frac{2n-N-1}{2}, n \in \mathcal{N}=\{0,1,\ldots,N-1\}$. The XL-array has an aperture size of $D = (N-1)d$, where $d=\lambda/2$ denotes the half-wavelength inter-antenna spacing with $\lambda$ denoting the carrier wavelength. 

For the boundary between the near-field and far-field regions of the XL-array, we consider the \emph{effective} Rayleigh distance in this paper, which is defined as $Z(\theta)=\epsilon\frac{2D^2(1-\theta^2)}{\lambda}$ with $\theta=\sin(\phi)$ representing the \emph{spatial} angle of the user, $\phi$ being the physical angle, and $\epsilon=0.367$ \cite{Cui_Mingyao2024}. It is worth noting that compared with conventional Rayleigh distance defined based on phase variations, the effective Rayleigh distance specifies the distance boundary where the \emph{array gains} obtained from far-field and near-field channel models exhibit notable differences, thereby directly affecting the communication rate performance (such as secrecy rate in this paper)\cite{Cui_Mingyao2024}.
For the locations of users, we consider a challenging scenario where Bobs, denoted by $\mathcal{K}=\{1,2,\dots,K\}$, are located in the far-field region of the BS, while Eves, denoted by $\mathcal{M}=\{1,2,\dots,M\}$, are assumed to be located in the near-field region for intercepting confidential information at a short distance, while the results in this paper can be extended to other scenarios (see discussions in \textbf{Remark \ref{othercase}}). 
% For example, considering an XL-array BS with $N=256$ antenna operating at 30 GHz frequency band, the effective Rayleigh distance $Z$ is about 120 m  at a spatial angle of $\theta=0$ rad. As such, users in typical cellular networks (e.g., with a cell radius of 200 m) can generally be located in either the near-field or far-field regions. 

\subsubsection{Channel models} The channel models for the near-field Eves and far-field Bobs are introduced as follows.

\underline{\textbf{Near-field channel model for Eves:}} For the near-field Eve $m \in \mathcal{M}$, its channel from the BS can be modeled as 
\begin{align}
    \mathbf{h}^H_{{\rm E},m} &= (\mathbf{h}^{\rm LoS}_{{\rm E},m})^H+\sum_{\ell_m=1}^{L_m}(\mathbf{h}^{\rm NLoS}_{{\rm E},\ell_m})^H,
\end{align}
which consists of one line-of-sight (LoS) path and $L_m$ non-LoS (NLoS) paths. Specifically, the LoS component of the near-field channel from the BS to Eve $m$ can be modeled as 
\begin{align}
&(\mathbf{h}^{\rm LoS}_{{\rm E},m})^H = \sqrt{N}h_{{\rm E},m}\mathbf{b}^H(\theta_{{\rm E},m},r_{{\rm E},m}),
\end{align} 
where $h_{{\rm E},m}=\frac{\sqrt{\beta}}{ r_{{\rm E},m}}e^{-\frac{j2\pi r_{{\rm E},m}}{\lambda}}$ is the complex-valued channel gain, with $\beta$ and $r_{{\rm E},m}$ representing the reference channel gain at a distance of 1 m and the distance between the BS array center and Eve $m$, respectively. $\mathbf{b}^H(\theta_{{\rm E},m},r_{{\rm E},m})$ denotes the near-field channel steering vector from the BS to Eve $m$, which is modeled based on uniform SWs (USWs)  \cite{Wu_Xun2024}
\begin{align}
	\vspace{-3pt}
    &\mathbf{b}^H\left(\theta_{{\rm E},m},r_{{\rm E},m}\right)\nonumber\\
    &=\!\frac{1}{\sqrt{N}}\!\left[\!e^{-j2\pi(\!r^{(0)}_{{\rm E},m}-r_{{\rm E},m})/\lambda},\dots,e^{-j2\pi(r^{(N-1)}_{{\rm E},m}-r_{{\rm E},m})/\lambda}\!\right]\!.
    \vspace{-3pt}
\end{align}
Herein, $\theta_{{\rm E},m} \triangleq \sin{\phi_{{\rm E},m}}$ is the spatial angle at the BS with $\phi_{{\rm E},m} \in (-\frac{\pi}{2},\frac{\pi}{2})$ denoting the physical angle-of-departure (AoD) from the BS array center to Eve $m$; and $r^{(n)}_{{\rm E},m}=\sqrt{r^2_{{\rm E},m}+\delta^2_nd^2-2r_{{\rm E},m}\theta_{{\rm E},m} \delta_n d}$ represents the distance between the $n$-th antenna of the BS and Eve $m$.

Similarly, the NLoS components can be modeled as
\begin{align}
&(\mathbf{h}^{\rm NLoS}_{{\rm E},\ell_m})^H = \sqrt{\frac{N}{L_m}}h_{{\rm E},\ell_m}\mathbf{b}^H(\theta_{{\rm E},\ell_m},r_{{\rm E},\ell_m}),
\end{align}
where $h_{{\rm E},\ell_m}$ is the corresponding complex-valued channel gain of the $\ell_m$-th NLoS path; and $r_{{\rm E},\ell_m}$ and $\theta_{{\rm E},\ell_m}\triangleq \cos{\phi_{{\rm E},\ell_m}}$ denote the distance and spatial angle of the $\ell_m$-scatterer with respect to (w.r.t.) the origin with $\phi_{{\rm E},\ell_m}$ being the physical AoD.

\underline{\textbf{Far-field channel model for Bobs:}} For the far-field Bob $k$, its channel from the BS can be modeled based on  uniform PWs (UPWs), which is given by 
\begin{align}
        \mathbf{h}^{H}_{{\rm B},k}\!=\!\sqrt{N}h_{{\rm B},k}\mathbf{a}^H(\theta_{{\rm B},k})\!+\!\sqrt{\frac{N}{L_k}}\!\sum_{\ell_k=1}^{L_k}\!h_{{\rm B},\ell_k}\mathbf{a}^H(\theta_{\rm B,\ell_k}),
\end{align}
where $h_{{\rm B},k} = \frac{\sqrt{\beta}}{ r_{{\rm B},k}}e^{-\frac{j2\pi r_{{\rm B},k}}{\lambda}}$ represents the complex-valued channel gain with $r_{{\rm B},k}$ denoting the distance between the array center and Bob $k$. $\mathbf{a}^H(\theta_{{\rm B},k})$ denotes the far-field channel steering vector from the XL-array BS to Bob $k$, which is modeled as
\begin{align}
    \mathbf{a}^H(\theta_{{\rm B},k}) = \frac{1}{\sqrt{N}}\left[1, e^{j \pi \theta_{{\rm B},k}}, \dots, e^{j \pi (N-1) \theta_{{\rm B},k}} \right],
\end{align}
where $\theta_{{\rm B},k} \triangleq \sin{\phi_{{\rm B},k}}$ represents the spatial angle at the BS with $\phi_{{\rm B},k} \in (-\frac{\pi}{2},\frac{\pi}{2})$ denoting the AoD from the array center to Bob $k$.

\subsubsection{Signal model}
Let $x_{{\rm B},k}$, $k \in \mathcal{K}$ denote the information-bearing signal sent by the BS to Bob $k$. To reduce the power and hardware cost in high-frequency bands, we consider the hybrid beamforming architecture at the BS to serve Bobs with $K$ ratio frequency (RF) chains \cite{Wang_Zhaolin2025}. As such, the transmitted signal by the BS can be expressed as $\bar{\mathbf{x}}=\mathbf{F}_{\rm A}\mathbf{F}_{\rm D}\mathbf{x}$, where $\mathbf{x}=[x_{{\rm B},1}, x_{{\rm B},2}, \cdots, x_{{\rm B},K}]^T$, $\mathbf{F}_{\rm D}$ represents a $K\times K$ digital precoder and $\mathbf{F}_{\rm A}=[\mathbf{w}_{{\rm B},1},\mathbf{w}_{{\rm B},2},\cdots, \mathbf{w}_{{\rm B},K}]$ denotes an $N\times K$ analog precoder with $\mathbf{w}_{{\rm B},k}$ denoting the analog beamforming vector for Bob $k$. 

We assume that the BS has perfect CSI of all Bobs and Eves, which can be acquired by using existing far-field and near-field channel estimation and beam training methods (e.g., \cite{Zhang_Yunpu2022,Han_Yu2020,You_Changsheng2020,You_Changsheng202038}).\footnote{The results in \cite{Mukherjee2012} revealed that even for a passive Eve, it is still possible to estimate its CSI through the local oscillator power inadvertently leaked from the Eve's receiver RF frontend. In addition, this paper can be extended to the case where only imperfect CSI of Eve is known at the BS, by modeling the channel of Eve based on deterministic model accounting for CSI uncertainty\cite{Xu_Dongfang2020}.}  As such, the received signal at Bob $k$ is given by
\begin{align}
	y_{{\rm B},k} &= \mathbf{h}^H_{{\rm B},k}\mathbf{F}_{\rm A}\mathbf{F}_{\rm D}\mathbf{x}+z_{{\rm B},k},
\end{align}
where $z_{{\rm B},k}$ with power $\sigma^2$ is the received additive white Gaussian noise (AWGN). The corresponding achievable information rate of Bob $k$, in bits per second per Hertz (bps/Hz), is given by
\begin{align}
	&R_{{\rm B},k}=\log _2\left(1+\frac{\left|\mathbf{h}^H_{{\rm B},k}\mathbf{F}_{\rm A}\mathbf{f}_{{\rm D},k}\right|^2}{\sum^K_{i=1,i \neq k}\left|\mathbf{h}^H_{{\rm B},k}\mathbf{F}_{\rm A}\mathbf{f}_{{\rm D},i}\right|^2+\sigma^2}\right), \label{AchinforRate}
\end{align}
where $\mathbf{f}_{{\rm D},k}$ represents the $k$-th column of $\mathbf{F}_{\rm D}$.

Meanwhile, the received signal at Eve $m$ for intercepting information $x_{{\rm B},k}$ is given by
\begin{align}
y_{{\rm E},m,k}=\mathbf{h}^H_{{\rm E},m}\mathbf{F}_{\rm A}\mathbf{F}_{\rm D}\mathbf{x}+z_{{\rm E},m},
\end{align}
where $z_{{\rm E},m}$ with power $\sigma^2$ is the received AWGN. The eavesdropping rate (in bps/Hz) of Eve $m$ for wiretapping $x_{{\rm B},k}$ is thus given by
\begin{align}
	&R_{{\rm E},m,k}=\log _2\left(1\!+\!\frac{ \!\left|\mathbf{h}^H_{{\rm E},m}\mathbf{F}_{\rm A}\mathbf{f}_{{\rm D},k}\right|^2}{\sum^K_{i=1,i \neq k}\!\left|\mathbf{h}^H_{{\rm E},m}\mathbf{F}_{\rm A}\mathbf{f}_{{\rm D},i}\right|^2\!\!+\!\!\sigma^2}\!\right)\!. \label{Acheverate}
\end{align}

Similar to \cite{Anaya_López2023,Li_Qiang2013,Ng_Derrick2013}, we consider that the $M$ Eves operate independently without cooperation in intercepting the signals transmitted to Bobs. As such, the eavesdropping rate for intercepting the data to Bob $k$ (i.e., $x_{{\rm B},k}$) is determined by the maximum eavesdropping rate among all Eves, which is given by $R_{{\rm E},k} =\max \left\{ R_{{\rm E},m,k}, m \in \mathcal{M} \right\}$. As such, the achievable-secrecy-rate (in bps/Hz) of Bob $k$ is given by
\begin{align}
    R_{{\rm S},k} &= \left[R_{{\rm B},k}-R_{{\rm E},k}\right]^{+}\nonumber\\
    &=\left[R_{{\rm B},k}-\max \left\{ R_{{\rm E},m,k}, m \in \mathcal{M} \right\}\right]^{+},\label{SecuRate}
\end{align}
where $[x]^{+}\triangleq\max\{x,0\}$. 

\subsection{Problem Formulation}
In this subsection, an optimization problem is formulated to maximize the sum-secrecy-rate of all Bobs by jointly optimizing the digital beamforming, i.e., $\mathbf{F}_{\rm D}$, and the analog beamforming, i.e., $\mathbf{F}_{\rm A}$ as follows
\begin{align}
	\text{(P1)}\;\; \max_{\mathbf{F}_{\rm A},\mathbf{F}_{\rm D}} \quad &\sum^K_{k=1} R_{{\rm S},k} \label{P1sumrate}\\
	\text{s.t.}\;\;\;\;\;&|[\mathbf{F}_{\rm A}]_{n,k}|=1, \forall n \in \mathcal{N}, k \in \mathcal{K}, \label{ConstantModulus}\\
	&\|\mathbf{F}_{\rm A}\mathbf{F}_{\rm D}\|^2_F \leq P_{\rm tot}, \label{TotPower}
\end{align}
where \eqref{ConstantModulus} is the constant-modulus constraint, and \eqref{TotPower} is the transmit power constraint for the BS with $P_{\rm tot}$ denoting its maximum transmit power. This optimization problem can be solved by using an efficient hybrid beamforming method that minimizes the difference between the optimal digital beamforming and the hybrid beamforming\cite{Wang_Zhaolin2025}, as will be discussed in \textbf{Remark \ref{AlgorithmForhybrid}}.

In this paper, we aim to study the effect of beam scheduling for Bobs on the system PLS performance. Therefore, to obtain useful insights as well as reduce hardware cost of XL-arrays, we consider in the sequel a low-complexity hybrid beamforming scheme \cite{Zhang_Yunpu2024}, where $K$ analog beams are steered towards individual Bobs (i.e., $\mathbf{w}_{{\rm B},k}=\mathbf{a}(\theta_{{\rm B},k})$) and their beam powers ($\mathbf{p}=\{P_{{\rm B},k}, k \in \mathcal{K}\}$) can be flexibly controlled. As such, the digital beamforming can be simplified as an identity matrix, i.e., $\mathbf{F}_{\rm D}=\mathbf{I}_{K}$ \cite{Zhang_Yunpu2024,Jin_Juening2019}\footnote{It is worth noting that this low-complexity scheme can be easily implemented in practice. Moreover, it will be shown in Section V that its performance is close to that of a more complicated scheme discussed in \textbf{Remark \ref{AlgorithmForhybrid}}, where digital beamforming (i.e., $\mathbf{F}_{\rm D}$) is further devised to enhance PLS performance.}. Under the above assumptions, the
achievable-secrecy-rate (in bps/Hz) of Bob $k$ can be rewritten as \eqref{securrateMRT}, which is presented at the bottom of the next page. Then, Problem (P1) reduces to 
\begin{align}
\text{(P2)}\;\; \max_{\mathbf{p}} \sum^K_{k=1} R_{{\rm S},k} %\label{P1sumrate}\\
~~~~~~\text{s.t.} \sum^K_{k=1}P_{{\rm B},k} \leq P_{\rm tot}. \label{P1powercons}
\end{align}

Problem (P2) is a non-convex optimization problem and hence difficult to be optimally solved, since 1) the objective function takes the form of a logarithmic fraction with strongly coupled power allocation variables, and 2) the operator $\left[\cdot\right]^+$ makes the objective function non-smooth and the operator $\max\{\cdot\}$ renders Problem (P2) more difficult to be optimally solved. On the other hand, directly solving Problem (P2) yields few insights into the PLS in mixed-field scenarios. To address these issues, we first consider two special cases with a small number of Bobs and Eves, and characterize the insecure-transmission region of Bobs in closed-form. 
Next, we further propose an efficient algorithm to obtain a high-quality solution to Problem (P1) for the general case.

\begin{figure*}[b]% [b] 强制公式浮动到底部
%	\vspace{-5pt}
	\hrulefill % 在正文和公式之间绘制横线
	\vspace{0.5em} % 横线与公式的间距
	\begin{equation}
		\resizebox{\textwidth}{!}{$
			R_{{\rm S},k}\!=\!\left[\!\log _2\left(1+\frac{N P_{{\rm B},k}|h_{{\rm B},k}|^2}{\sum^K_{i=1,i \neq k}NP_{{\rm B},i}|h_{{\rm B},k}|^2\left|\mathbf{a}^H(\theta_{{\rm B},k})\mathbf{a}^H(\theta_{{\rm B},i})\right|^2+\sigma^2}\right)\!-\!\max\left\{\log _2\left(1+\frac{N P_{{\rm B},k}|h_{{\rm E},m}|^2\left|\mathbf{b}^H(\theta_{{\rm E},m},r_{{\rm E},m})\mathbf{a}^H(\theta_{{\rm B},k})\right|^2}{\sum^K_{i=1,i \neq k}NP_{{\rm B},i}|h_{{\rm E},m}|^2\left|\mathbf{b}^H(\theta_{{\rm E},m},r_{{\rm E},m})\mathbf{a}^H(\theta_{{\rm B},i})\right|^2+\sigma^2}\right), m \in \mathcal{M}\right\} \!\right]^{+} $}.
		\label{securrateMRT} % 标签仍然有效
	\end{equation}
	\vspace{-1em} % 调整公式与页脚之间的间距
\end{figure*}

\section{What is the Secure-Transmission Region in the Mixed-Field PLS System ?}
To shed important insights into the new characteristics of mixed-field PLS, we investigate in this section two special cases for Problem (P2), where there are 1) one Bob and one Eve, and 2) two Bobs and one Eve. For the former case, we reveal that the insecure-transmission region of Bob in the mixed-field PLS system is significantly expanded due to the mixed-field channel correlation (or energy-spread effect), as compared to conventional far-field PLS systems. Furthermore, we show that in two-Bob systems, the interference from Bobs to Eve can be strategically utilized to mitigate eavesdropping capabilities of the Eve, thereby enhancing the sum-secrecy-rate.

To facilitate theoretical analysis, we mainly consider the communication systems operating in high-frequency bands, for which the LoS path has much stronger power than the NLoS paths\footnote{The main analytical results in the LoS case are applicable to the general multipath channel case, as will be shown in Fig. \ref{simulation_result1}. Moreover, the performance of proposed algorithm under general multipath channels will be numerically evaluated in Section V.}. As such, the channels from the BS to Eve $m$ and that from the BS to Bob $k$ can be approximated by their LoS components, which are respectively obtained as $\mathbf{h}^H_{{\rm E},m}\approx \sqrt{N}h_{{\rm E},m}\mathbf{b}^H(\theta_{{\rm E},m},r_{{\rm E},m})$ and $\mathbf{h}^H_{{\rm B},k}\approx \sqrt{N}h_{{\rm B},k}\mathbf{a}^H(\theta_{{\rm B},k})$.

\vspace{-4pt}
\subsection{Case I: One-Bob-One-Eve System}
First, we consider the case where there is only one Bob and one Eve in the mixed-field PLS system. As such, the power allocation is not needed.  Based on the above, the objective function in Problem (P1) reduces to the achievable-secrecy-rate below (with users' indices omitted for brevity)
\begin{align}
     R_{\rm S} &\!=\!\left[ R_{\rm B}-R_{\rm E} \right]^+\nonumber\\
     &\!=\!\left[\!\log _2\!\left(1\!+\!\frac{P_{\rm B}\left|\mathbf{h}_{\rm B}^H \mathbf{w}_{\rm B}\right|^2}{\sigma^2}\right)\!-\!\log _2\left(1\!+\!\frac{P_{\rm B} \left|\mathbf{h}_{\rm E}^H \mathbf{w}_{\rm B}\right|^2}{\sigma^2}\right)\!\right]^+\nonumber\\
  &=\!\left[\!\log_2\left(1+\frac{\frac{\mu}{r^2_{\rm B}}\left(\frac{1}{r^2_{\rm B}}-\frac{1}{r^2_{\rm E}}\left|\mathbf{b}^{H}\left(\theta_{\rm E},r_{\rm E}\right)\mathbf{a}\left(\theta_{\rm B}\right)\right|^2\right)}{1+\frac{\mu}{r^2_{\rm E}} \left|\mathbf{b}^{H}\left(\theta_{\rm E},r_{\rm E}\right)\mathbf{a}\left(\theta_{\rm B}\right)\right|^2}\right)\!\right]^+,\label{SecuRate1}
\end{align}
where $\mu \triangleq NP_{\rm B}\beta/\sigma^2$. Next, we first characterize the insecure-transmission region and the achievable-secrecy-rate of the Bob.

It is observed from \eqref{SecuRate1} that, the achievable-secrecy-rate of Bob is jointly determined by the path-loss of Bob and Eve, i.e., $1/r^2_{\rm B}$ and $1/r^2_{\rm E}$, as well as the channel correlation between Bob and Eve, i.e., $\left|\mathbf{b}^{H}\left(\theta_{\rm E},r_{\rm E}\right)\mathbf{a}\left(\theta_{\rm B}\right)\right|^2$. Note that in the far-field PLS system, when Bob and Eve are located at different angles, their channel correlation is usually \emph{small/negligible} due to channel (near-) orthogonality in massive/XL-MIMO systems. However, this conclusion may not hold in the mixed-field scenario due to the intricate correlation between the near-field channel (for Eve) and the far-field beam (steered to Bob), which is formally defined as follows.

\subsubsection{Insecure-transmission region of Bob}
Based on \eqref{SecuRate1}, the data transmissions to Bob are secure when $R_{\rm S}=\left[ R_{\rm B}-R_{\rm E} \right]^+>0$ \cite{Li_Bin2019}, leading to the following result.
\begin{lemma}\label{securitycon_initial}\rm (Secure-transmission condition)
    In the one-Bob-one-Eve system, the data transmissions to Bob are secure when
    \begin{align}
        \left|\mathbf{b}^{H}\left(\theta_{\rm E},r_{\rm E}\right)\mathbf{a}\left(\theta_{\rm B}\right)\right|^2<\frac{r^2_{\rm E}}{r^2_{\rm B}}. \label{SecuCond}
    \end{align}
\end{lemma}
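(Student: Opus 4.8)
The plan is to read the secure-transmission condition directly off the closed-form secrecy rate in \eqref{SecuRate1}: since $R_{\rm S}=[R_{\rm B}-R_{\rm E}]^+$, data transmission to Bob is secure (i.e., $R_{\rm S}>0$) exactly when $R_{\rm B}>R_{\rm E}$, and both rates are strictly increasing functions of a single ``effective received SNR'', so the comparison collapses to a comparison of two scalars.

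First I would make the two received-signal powers explicit. With $\mathbf{w}_{\rm B}=\mathbf{a}(\theta_{\rm B})$ and the unit-norm normalization $\mathbf{a}^H(\theta_{\rm B})\mathbf{a}(\theta_{\rm B})=1$ of the far-field steering vector, the LoS approximation used throughout Section III gives $|\mathbf{h}_{\rm B}^H\mathbf{w}_{\rm B}|^2=N|h_{\rm B}|^2=N\beta/r_{\rm B}^2$ for Bob, and $|\mathbf{h}_{\rm E}^H\mathbf{w}_{\rm B}|^2=N|h_{\rm E}|^2\,|\mathbf{b}^H(\theta_{\rm E},r_{\rm E})\mathbf{a}(\theta_{\rm B})|^2=(N\beta/r_{\rm E}^2)\,|\mathbf{b}^H(\theta_{\rm E},r_{\rm E})\mathbf{a}(\theta_{\rm B})|^2$ for Eve. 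Substituting into $R_{\rm B}$, $R_{\rm E}$ and writing $\mu\triangleq NP_{\rm B}\beta/\sigma^2$ yields $R_{\rm B}=\log_2(1+\mu/r_{\rm B}^2)$ and $R_{\rm E}=\log_2\!\big(1+(\mu/r_{\rm E}^2)\,|\mathbf{b}^H(\theta_{\rm E},r_{\rm E})\mathbf{a}(\theta_{\rm B})|^2\big)$, i.e., exactly the two log-terms underlying \eqref{SecuRate1}.

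Next I would invoke strict monotonicity of $\log_2(1+\cdot)$: $R_{\rm S}>0\iff R_{\rm B}>R_{\rm E}\iff \mu/r_{\rm B}^2>(\mu/r_{\rm E}^2)\,|\mathbf{b}^H(\theta_{\rm E},r_{\rm E})\mathbf{a}(\theta_{\rm B})|^2$. Since $P_{\rm B}>0$ forces $\mu>0$, we may divide by $\mu$ and multiply by $r_{\rm E}^2$ without changing the inequality, which gives precisely \eqref{SecuCond}. Equivalently, one can argue on \eqref{SecuRate1} directly: its denominator $1+(\mu/r_{\rm E}^2)\,|\mathbf{b}^H(\theta_{\rm E},r_{\rm E})\mathbf{a}(\theta_{\rm B})|^2$ is strictly positive, so $R_{\rm S}>0$ iff the numerator, a positive multiple of $\tfrac{1}{r_{\rm B}^2}-\tfrac{1}{r_{\rm E}^2}|\mathbf{b}^H(\theta_{\rm E},r_{\rm E})\mathbf{a}(\theta_{\rm B})|^2$, is strictly positive, which is the same condition.

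There is essentially no hard step here; the statement is a one-line corollary of \eqref{SecuRate1}, and the ``obstacle'' is only bookkeeping. The two points deserving a word of care are (i) checking that the denominator in \eqref{SecuRate1} never vanishes, so the direction of the inequality is preserved when it is cleared, and (ii) being consistent about strictness --- ``secure'' means $R_{\rm S}>0$, hence \eqref{SecuCond} is strict, with equality $|\mathbf{b}^H(\theta_{\rm E},r_{\rm E})\mathbf{a}(\theta_{\rm B})|^2=r_{\rm E}^2/r_{\rm B}^2$ marking the boundary of the insecure-transmission region. I would also record the observation that $\mu$ (hence $P_{\rm B}$) cancels: this explains why no power control is needed in Case I and why the region is purely geometric, governed by the channel correlation $|\mathbf{b}^H(\theta_{\rm E},r_{\rm E})\mathbf{a}(\theta_{\rm B})|^2$ weighed against the path-loss ratio $r_{\rm E}^2/r_{\rm B}^2$ --- the starting point for the subsequent closed-form characterization of the insecure region.
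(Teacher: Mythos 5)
Your proof is correct and follows essentially the same route as the paper: both read the condition off the closed-form secrecy rate in \eqref{SecuRate1} by observing that $R_{\rm S}>0$ iff the numerator term $\tfrac{1}{r_{\rm B}^2}-\tfrac{1}{r_{\rm E}^2}\left|\mathbf{b}^{H}\left(\theta_{\rm E},r_{\rm E}\right)\mathbf{a}\left(\theta_{\rm B}\right)\right|^2$ is positive, which rearranges to \eqref{SecuCond}. Your added remarks on the non-vanishing denominator and the cancellation of $\mu$ are sound but do not change the argument.
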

\begin{proof}
According to the secrecy rate in \eqref{SecuRate1}, the data transmissions to Bob are secure when $1/r^2_{\rm B} - \left|\mathbf{b}^{H}\left(\theta_{\rm E},r_{\rm E}\right)\mathbf{a}\left(\theta_{\rm B}\right)\right|^2/r^2_{\rm E} > 0$, which can be rewritten as \eqref{SecuCond}, thereby completing the proof.
\end{proof}
\begin{figure}[H]
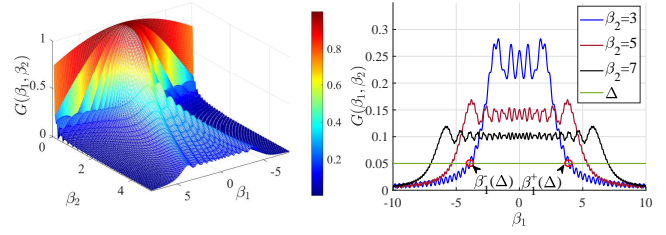
%[t]
	\centering
	\begin{subfigure}[b]{0.49\linewidth}
		\includegraphics[width=1\linewidth]{correlation_vs_beta12_3D.eps}
		\caption{3D Illustration.}
		\label{3D_Illustration}
		%\vspace{1em}
	\end{subfigure}
	\begin{subfigure}[b]{0.49\linewidth}
		\includegraphics[width=1\linewidth]{correlation_vs_beta1_underbeta2_1D.eps}
		\caption{1D Illustration.}
		\label{1D_Illustration}
	\end{subfigure}
	\caption{Illustrations of function $G(\beta_1,\beta_2)$ v.s. $\beta_1$ and $\beta_2$.}
	\label{G_vs_beta1}
	\vspace{-5pt}
\end{figure}

As shown in \eqref{SecuCond}, the secure-transmission condition of Bob is jointly affected by the path-loss ratio and the mixed-field \emph{channel-beam correlation}, both of which are affected by the locations of Bob and Eve. Specifically, a larger angular difference between Bob and Eve (i.e., lower channel-beam correlation) and a larger path-loss ratio generally lead to less information leakage and hence improved secrecy rates. To obtain further insights, we make the following definition.

\begin{definition}\rm 
	The correlation between a near-field steering vectors  $\mathbf{b}^H\left(\theta_p,r_p\right)$ and a far-field steering vectors $\mathbf{a}\left(\theta_q\right)$ is defined as $\eta(\theta_q;\theta_p,r_p) = \left|\mathbf{b}^{H}\left(\theta_p,r_p\right)\mathbf{a}\left(\theta_q\right)\right|$.
\end{definition}

\begin{lemma} \label{correlation} \rm
    The mixed-field correlation function $\eta(\theta_q;\theta_p,r_p)$ can be approximated as \cite{Zhou_Cong2024}
    \begin{align}
        \eta(\theta_q;\theta_p,r_p) &\approx \left|\frac{\hat{C}(\beta_1,\beta_2)+j\hat{S}(\beta_1,\beta_2)}{2\beta_2}\right|\\
        & \triangleq G(\beta_1,\beta_2),
    \end{align}
    	where $\beta_1 = (\theta_p-\theta_q)/\sqrt{\frac{d(1-\theta^2_p)}{r_p}}$ and $ \beta_2 = \frac{N}{2}\sqrt{\frac{d(1-\theta^2_p)}{r_p}}$. Besides, $\hat{C}(\beta_1,\beta_2)=C(\beta_1+\beta_2)-C(\beta_1-\beta_2)$ and  $\hat{S}(\beta_1,\beta_2)=S(\beta_1+\beta_2)-S(\beta_1-\beta_2)$, in which $C(\beta)=\int_{0}^{\beta}\cos(\frac{\pi}{2}t^2){\rm d}t$ and $S(\beta)=\int_{0}^{\beta}\sin(\frac{\pi}{2}t^2){\rm d}t$ are Fresnel integrals.
\end{lemma}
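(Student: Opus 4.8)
The plan is to derive the approximation $\eta(\theta_q;\theta_p,r_p)\approx G(\beta_1,\beta_2)$ starting from the exact inner product of the steering vectors. First I would write out $\mathbf b^H(\theta_p,r_p)\mathbf a(\theta_q)$ explicitly as $\frac1N\sum_{n=0}^{N-1} e^{-j2\pi(r^{(n)}_p-r_p)/\lambda}\,e^{j\pi n\theta_q}$ (up to a global phase that is immaterial since we take the modulus). The key analytic ingredient is the second-order Taylor (Fresnel) expansion of the spherical-wave phase term: using $r^{(n)}_p=\sqrt{r_p^2+\delta_n^2 d^2-2r_p\theta_p\delta_n d}$ with $\delta_n=\frac{2n-N-1}{2}$, expand $r^{(n)}_p - r_p \approx -\theta_p\delta_n d+\frac{(1-\theta_p^2)\delta_n^2 d^2}{2r_p}$, discarding cubic and higher terms. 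Substituting this into the exponent and combining with the far-field phase $\pi n\theta_q$ (equivalently written in terms of $\delta_n$), the sum becomes $\frac1N\sum_n \exp\!\big(j\pi\delta_n(\theta_p-\theta_q)\cdot(\text{something}) - j\pi\frac{(1-\theta_p^2)\delta_n^2 d}{\lambda r_p}\big)$ after collecting linear-in-$\delta_n$ and quadratic-in-$\delta_n$ coefficients.

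Next I would pass from the discrete sum to an integral: since the array is dense ($d=\lambda/2$) and $N$ large, $\frac1N\sum_{n=0}^{N-1}f(\delta_n)\approx \frac1N\int_{-N/2}^{N/2} f(s)\,\mathrm ds$, and then perform the change of variables that normalizes the quadratic term to $\frac{\pi}{2}t^2$. This is exactly the substitution that produces the parameters $\beta_1=(\theta_p-\theta_q)/\sqrt{d(1-\theta_p^2)/r_p}$ (which rescales the linear coefficient) and $\beta_2=\frac N2\sqrt{d(1-\theta_p^2)/r_p}$ (the rescaled half-aperture, i.e. the new integration limit). Completing the square in the exponent, the integral becomes $\int_{-\beta_2}^{\beta_2}e^{j(\frac{\pi}{2}t^2 + \pi\beta_1 t)}\,\mathrm dt$ up to a $\beta_1$-dependent phase factor of unit modulus; shifting $t\mapsto t-\beta_1$ turns this into $\int_{\beta_1-\beta_2}^{\beta_1+\beta_2} e^{j\frac{\pi}{2}u^2}\,\mathrm du$, whose real and imaginary parts are precisely $C(\beta_1+\beta_2)-C(\beta_1-\beta_2)=\hat C(\beta_1,\beta_2)$ and $\hat S(\beta_1,\beta_2)$. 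Dividing by the normalization (which works out to $2\beta_2$ after the variable change relative to the $\frac1N$ prefactor) and taking the modulus yields $\big|\frac{\hat C(\beta_1,\beta_2)+j\hat S(\beta_1,\beta_2)}{2\beta_2}\big| = G(\beta_1,\beta_2)$, as claimed. I would also note that $\eta$ and $G$ range in $[0,1]$, consistent with $G(0,\beta_2)\to 1$ (the Fresnel integrals give $\hat C(0,\beta_2)=2C(\beta_2)$, and one checks the normalization is consistent in the limit).

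The main obstacle — and the place where rigor has to be handled carefully rather than waved through — is controlling the two approximation errors simultaneously: (i) the truncation of the phase Taylor expansion at second order, which is legitimate only when the cubic term $\frac{\theta_p(1-\theta_p^2)\delta_n^3 d^3}{2r_p^2}$ contributes negligible phase over the whole aperture, i.e. roughly $D^3/(\lambda r_p^2)\ll 1$ — this is essentially the Fresnel/near-field regime assumption and is exactly where the effective Rayleigh distance $Z(\theta)$ enters; and (ii) the sum-to-integral (Euler–Maclaurin) error, which is small for large $N$ with half-wavelength spacing. I would state these as the operating assumptions (large $N$, user in the Fresnel zone) under which the ``$\approx$'' holds, cite \cite{Zhou_Cong2024} for the detailed bookkeeping, and not belabor the constants. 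A secondary, purely cosmetic subtlety is keeping track of the several unit-modulus global phase factors (the overall $e^{j2\pi r_p/\lambda}$, the phase from completing the square, any shift-induced phase) and confirming they all drop out under $|\cdot|$; this is routine but worth a one-line remark so the reader sees why only the Fresnel-integral combination survives.
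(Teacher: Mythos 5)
Your derivation is correct and is exactly the standard Fresnel-approximation argument (second-order Taylor expansion of the spherical phase, sum-to-integral conversion, completing the square to obtain the Fresnel integrals with the $2\beta_2$ normalization) that the paper invokes by citing Theorem~1 of \cite{YZhang2023} in lieu of a written proof. Your explicit tracking of the two error sources and the unit-modulus phase factors is a faithful reconstruction of that omitted argument, so no gap to report.
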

\begin{proof}
    The proof is similar to that in [Theorem 1 \cite{YZhang2023}] and hence is omitted for brevity.
\end{proof}

\begin{remark} \label{propertyofcorre}  \rm  
    (Useful properties of function $G(\cdot)$) To draw useful insights, we first illustrate in Fig. \ref{G_vs_beta1} the function $G(\beta_1,\beta_2)$ versus (v.s.) $\beta_1$ and $\beta_2$. Several important properties of function $G(\cdot)$ are summarized as follows\cite{YZhang2023}.
\begin{itemize}
    \item As shown in Fig. \ref{G_vs_beta1}(a), for fixed $\beta_2$, $G(\beta_1,\beta_2)$ has a significantly large value with slight fluctuations within a specific interval, which is symmetric w.r.t. $\beta_1=0$. Furthermore, this interval gradually expands as $\beta_2$ increases. When outside this interval, $G(\beta_1,\beta_2)$ gradually decreases as $|\beta_1|$ increases.
    \item As shown in Fig. \ref{G_vs_beta1}(b), given $\beta_2$, there exist two solutions for $\beta_1$ to the equation $ G(\beta_1,\beta_2)=\Delta$, when $\Delta<G(0,\beta_2)$. These two solutions are denoted as $\beta^{\rm -}_1(\!\Delta\!)<0$ and $\beta^{\rm +}_1(\!\Delta\!)>0$, respectively.
	Moreover, we can show that both $|\beta^{\rm -}_1(\!\Delta\!)|$ and $|\beta^{\rm +}_1(\!\Delta\!)|$ are decreasing with  $\Delta$.
\end{itemize}
\end{remark}

\begin{figure*}[t]
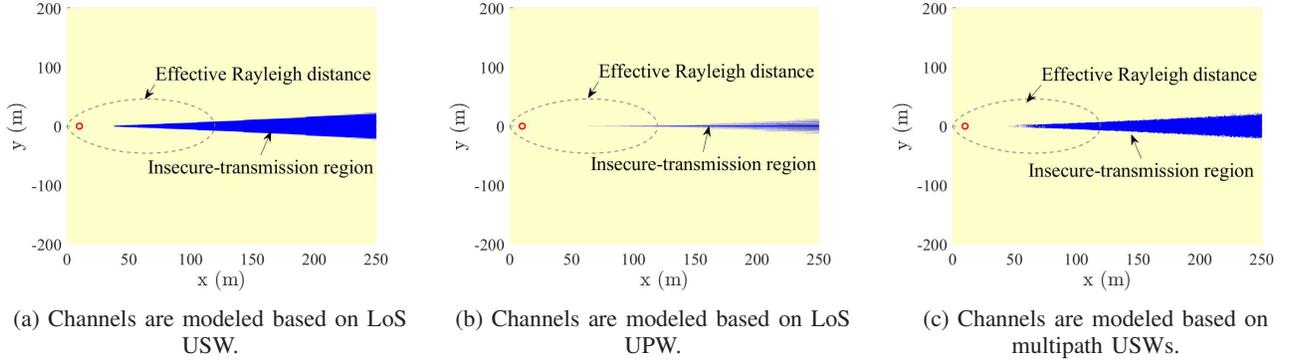

	\centering
	\begin{subfigure}[b]{0.3\linewidth}
		\includegraphics[width=1\linewidth]{secure_single_user_nearmodel_movbob_1.eps}
		\caption{Channels are modeled based on LoS USW.}
		\label{VRvsd1}
	\end{subfigure}
	\hspace{0.2cm}
	\begin{subfigure}[b]{0.3\linewidth}
		\includegraphics[width=1\linewidth]{secure_single_user_farmodel_movbob_1.eps}
		\caption{Channels are modeled based on LoS UPW.}
		\label{VRvsSmallN1}
	\end{subfigure}
	\hspace{0.2cm}
	\begin{subfigure}[b]{0.3\linewidth}
		\includegraphics[width=1\linewidth]{secure_single_user_nearmodel_movbob_1_multipath.eps}
		\caption{Channels are modeled based on multipath USWs.}
		\label{VRvsmultipath}
	\end{subfigure}
	\caption{Insecure-transmission region of the Bob in the mixed-field PLS system.}
	\label{simulation_result1}
		\vspace{-10pt}
\end{figure*}
Next, we characterize the insecure-transmission region of Bob when the location of Eve is fixed. Moreover, we investigate the effect of the location of Eve on the insecure-transmission region of Bob.

\begin{proposition}\label{SecuAnguRegi} 
\rm Given the location of Eve $(\theta_{\rm E},r_{\rm E})$ (and hence fixed $\beta_2$), the insecure-transmission region of Bob is characterized by
\begin{align}
    \mathcal{A}(\theta_{\rm E},r_{\rm E})=\{(\theta_{\rm B}, r_{\rm B}) \mid \theta_{\rm B} \in \Xi_{\theta}, {\rm and}\; r_{\rm B}>Z \}, \label{tot_insecure_region}
\end{align}
where the angular region (denoted as $\Xi_{\theta}$) is given by
	\begin{align}
		&\Xi_{\theta}\!=\!\left[\!\theta_{\rm E}\!-\!|\beta^{\rm +}_1(\!\Lambda\!)|\sqrt{\frac{d(1\!-\!\theta^2_{\rm E})}{r_{\rm E}}}\!, \theta_{\rm E}\!+\!|\beta^{\rm -}_1(\!\Lambda\!)|\sqrt{\frac{d(1\!-\!\theta^2_{\rm E})}{r_{\rm E}}}\!\right]\!, \label{insecure_thetaB}
        %&\Xi_{r} = \left[Z, + \infty\right],
	\end{align} 
	with $Z$ being the effective Rayleigh distance and $\Lambda\triangleq\frac{r_{\rm E}}{r_{\rm B}}$. In addition, $\beta^{\rm -}_1(\!\Lambda\!)<0$ and $\beta^{\rm +}_1(\!\Lambda\!)>0$ are the two solutions for $\beta_1$ such that $G(\beta_1,\beta_2)=\Lambda$, respectively. 
\end{proposition}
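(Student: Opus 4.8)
\textbf{Proof proposal for Proposition \ref{SecuAnguRegi}.}

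The plan is to translate the secure-transmission condition of Lemma \ref{securitycon_initial} into a condition on the mixed-field correlation function, then invoke the parametrization and monotonicity properties of $G(\cdot)$ from Lemma \ref{correlation} and Remark \ref{propertyofcorre} to invert it explicitly. First, I would start from \eqref{SecuCond}: the transmission to Bob is \emph{insecure} precisely when $\left|\mathbf{b}^H(\theta_{\rm E},r_{\rm E})\mathbf{a}(\theta_{\rm B})\right|^2 \ge r_{\rm E}^2/r_{\rm B}^2$, i.e. when $\eta(\theta_{\rm B};\theta_{\rm E},r_{\rm E}) \ge r_{\rm E}/r_{\rm B} = \Lambda$. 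Using Lemma \ref{correlation}, $\eta(\theta_{\rm B};\theta_{\rm E},r_{\rm E}) \approx G(\beta_1,\beta_2)$ with $\beta_1 = (\theta_{\rm E}-\theta_{\rm B})/\sqrt{d(1-\theta_{\rm E}^2)/r_{\rm E}}$ and $\beta_2 = \tfrac{N}{2}\sqrt{d(1-\theta_{\rm E}^2)/r_{\rm E}}$; note that since the Eve's location is fixed, $\beta_2$ is a fixed constant and $\beta_1$ is an affine, strictly decreasing function of $\theta_{\rm B}$.

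Next I would invoke the shape of $G(\cdot,\beta_2)$ established in Remark \ref{propertyofcorre}: for fixed $\beta_2$, $G(\beta_1,\beta_2)$ is (approximately) unimodal and symmetric about $\beta_1=0$, attaining its maximum $G(0,\beta_2)$ there, so that for any threshold $\Lambda < G(0,\beta_2)$ the sublevel condition $G(\beta_1,\beta_2)\ge\Lambda$ holds exactly on an interval $\beta_1 \in [\beta_1^-(\Lambda),\beta_1^+(\Lambda)]$ symmetric about the origin, with $\beta_1^-(\Lambda)<0<\beta_1^+(\Lambda)$ the two roots of $G(\beta_1,\beta_2)=\Lambda$. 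Inverting the affine change of variables $\beta_1 \mapsto \theta_{\rm B}$ (which flips orientation because of the minus sign in $\theta_{\rm E}-\theta_{\rm B}$), the endpoint $\beta_1 = \beta_1^+(\Lambda)$ maps to $\theta_{\rm B} = \theta_{\rm E} - \beta_1^+(\Lambda)\sqrt{d(1-\theta_{\rm E}^2)/r_{\rm E}} = \theta_{\rm E} - |\beta_1^+(\Lambda)|\sqrt{d(1-\theta_{\rm E}^2)/r_{\rm E}}$, and $\beta_1=\beta_1^-(\Lambda)$ maps to $\theta_{\rm B} = \theta_{\rm E} + |\beta_1^-(\Lambda)|\sqrt{d(1-\theta_{\rm E}^2)/r_{\rm E}}$, which gives exactly the interval $\Xi_\theta$ in \eqref{insecure_thetaB}. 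Finally, the constraint $r_{\rm B}>Z$ is simply the standing assumption that Bob lies in the far-field region (so that the far-field model $\mathbf{a}(\theta_{\rm B})$ applies), and combining the angular and radial conditions yields \eqref{tot_insecure_region}.

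The main obstacle — and the place where the argument is genuinely an approximation rather than an identity — is justifying the step that converts "$G(\beta_1,\beta_2)\ge\Lambda$ on a single symmetric interval" into a clean closed-form region. Strictly speaking $G(\beta_1,\beta_2)$ exhibits the "slight fluctuations" and side lobes noted in Remark \ref{propertyofcorre}, so the sublevel set need not be a single interval for all $\Lambda$; the proof implicitly restricts to the regime $\Lambda$ below the main-lobe level and above the side-lobe ripples, where the two-root structure $\{\beta_1^-(\Lambda),\beta_1^+(\Lambda)\}$ is well-defined and $|\beta_1^\pm(\Lambda)|$ are monotone in $\Lambda$ (the second bullet of Remark \ref{propertyofcorre}). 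I would state this regime explicitly and lean on the cited properties of $G$ rather than re-deriving the Fresnel-integral asymptotics. A secondary, minor point is the dependence on $r_{\rm B}$: since $\Lambda = r_{\rm E}/r_{\rm B}$ decreases as $r_{\rm B}$ grows, the monotonicity of $|\beta_1^\pm(\cdot)|$ shows the angular width $\Xi_\theta$ expands with $r_{\rm B}$, which is consistent with the "expanded insecure region" narrative but is not needed for the statement itself, so I would defer it to the subsequent discussion.
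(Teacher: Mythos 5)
Your proposal follows essentially the same route as the paper's own proof: negate the secure-transmission condition of Lemma \ref{securitycon_initial}, rewrite it as $G(\beta_1,\beta_2)\geq\Lambda$ via Lemma \ref{correlation}, use the two-root structure from Remark \ref{propertyofcorre} to obtain $\beta_1\in[\beta_1^-(\Lambda),\beta_1^+(\Lambda)]$, invert the affine map $\beta_1\mapsto\theta_{\rm B}$ (with the sign flip correctly yielding the endpoints in \eqref{insecure_thetaB}), and append $r_{\rm B}>Z$ from the far-field assumption. Your explicit caveat about restricting $\Lambda$ to the regime where the sublevel set of $G(\cdot,\beta_2)$ is a single interval (above the side-lobe ripples) is a point the paper's proof leaves implicit, and is a worthwhile clarification rather than a deviation.
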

\begin{proof}
Please refer to Appendix \ref{insecure_thetaB_appendix}. 
\end{proof}

\begin{remark} \label{insecure_region_analyse}\rm 
Several interesting insights into the insecure-transmission angular region are drawn as follows.
\begin{itemize}
\item (Mixed-field v.s. far-field) For the one-Bob-one-Eve system, the insecure-transmission angular region of Bob in the mixed-field PLS case is generally \emph{larger} than that in the far-field PLS case. Specifically, for the mixed-field case, even when the far-field Bob is not located at the same angle with Eve, it can still suffer strong power intercepting (and hence insecure data transmission) if their angle difference is smaller than a threshold due to the energy-spread effect. This is in sharp contrast to the purely far-field case, where the data transmissions to Bob are insecure only when Bob locates at the same angle with Eve \cite{Anaya_Lopez2022}. 
\item (Effect of the location of Eve) \textbf{Proposition \ref{SecuAnguRegi}} indicates that the insecure-transmission angular region of Bob, i.e., $\Xi_{\theta}$, gets narrower when the spatial angle of Eve becomes farther away from the boresight angle (i.e., $\theta_{\rm E}=0$) and/or the distance of Eve becomes longer, which is expected due to the weaker energy-spread effect. %{\color{red}(i.e., smaller $\eta$)}.
\item (Effect of the distance of Bob) Based on \textbf{Remark \ref{propertyofcorre}}, when the location of Eve $(\theta_{\rm E},r_{\rm E})$ is given, the insecure-transmission angular region of Bob expands when the distance of Bob is longer, due to the more severe energy-spread effect of the far-field beam in the near-field.
\end{itemize}
\end{remark}

\begin{example}\label{example_secureregion}\rm
    (Insecure-transmission region of Bob) In Fig. \ref{simulation_result1}, we present a concrete example to show the insecure-transmission region of Bob (in blue) for the mixed-field PLS system with $N=256$, $f=30$ GHz and Eve located at (0 rad, 10 m). The channels of Bob and Eve are modeled based on the accurate USW (with UPW being a special case of USW) in Fig. \ref{simulation_result1}(a), while they are approximately modeled based on UPW in Fig. \ref{simulation_result1}(b). It is observed that the insecure-transmission region of Bob in Fig. \ref{simulation_result1}(a) is much larger than that in Fig. \ref{simulation_result1}(b) due to the energy-spread effect in mixed-field PLS systems. This thus necessitates accurate USW modelling for mixed-field PLS systems, since the UPW approximation under-estimates the insecure-transmission region. In addition, we also plot the insecure-transmission region of Bob in Fig. \ref{simulation_result1}(c) for the case where the more general multipath channel model is considered with a Rician factor of 10 dB. The insecure-transmission region obtained based on the multipath channel model is similar that of the simplified LoS channel model (see Fig. \ref{simulation_result1}(a)), which validates the effectiveness of extending the analytical results from the LoS channel case to the multipath channel case. 
\end{example}

\begin{remark} \label{othercase}\rm 
	(Other system setups)
	Three additional system setups are discussed as follows. 
	\begin{itemize}
		\item Both Eve and Bob are located in the far-field: In this scenario, the data transmission to Bob becomes insecure only when Eve locates at the same angle with Bob, due to the sharp beam generated by the XL-array \cite{Boqun_Zhao2024, Anaya_Lopez2022}. 
		\item Both Eve and Bob are in the near-field: In this scenario, when both Bob and Eve are located within the near-field region with a significant energy-focusing effect, the data transmissions to Bob become insecure only when Eve is close to Bob in the range domain\cite{Zhang_Zheng2024}. However, as the distance of Bob increases, the beam depth gradually expands, leading to a weakened energy-focusing effect. This thus results in the increased information leakage to Eve and gives rise to a small angular region where the transmission becomes insecure, as shown in Fig. \ref{simulation_result1}.
		\item Eve is in the far-field, while Bob is in the near-field: In this scenario, the data transmissions to Bob remain secure across all angular spaces, since Eve suffers more severe path loss and has lower beam gain compared to Bob \cite{Zhang_Zheng2024}. 
	\end{itemize}
\end{remark}

\begin{figure}[t]
\centering
\includegraphics[width=3in]{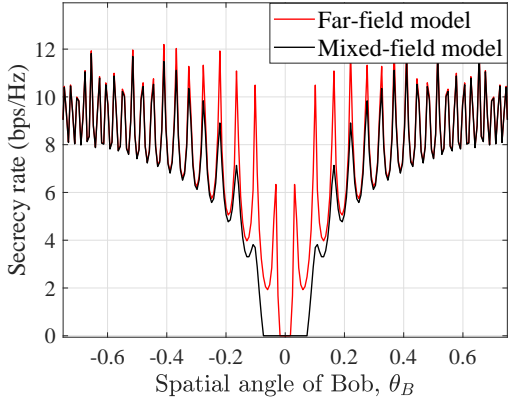}
\caption{Secrecy rate of Bob in the one-Bob-one-Eve system.}
\label{secrecy_rate_oneBob}
	\vspace{-10pt}
\end{figure}

\begin{example}\rm
 (Secrecy rate of Bob) In Fig. \ref{secrecy_rate_oneBob}, we show the secrecy rate of Bob v.s. its spatial angle under the same system setup in Example \ref{example_secureregion}. Two schemes are considered where the channels of Bob and Eve are accurately modeled based on USW (referred to as the mixed-field case) and approximately modeled by UPWs (referred to as the far-field case). It is observed that when the Bob-Eve angular difference decreases, their channel correlation increases, which allows Eve to intercept more information transmitted to Bob, resulting in reduced secrecy rate. Next, the secrecy rate of Bob in the mixed-field system is generally smaller than the one when the channel of Eve is approximately modeled by far-field UPWs. This indicates that it is necessary to judiciously design communication techniques for the mixed-field system. 
\end{example}

\vspace{-10pt}
\subsection{Case II: Two-Bob-One-Eve System}
Next, we consider a two-Bob-one-Eve system\footnote{Since each Eve intercepts information independently in the multi-Eve case, only the Eve with the highest eavesdropping rate needs to be considered. Therefore, the insights obtained from the one-Eve system are applicable to the multi-Eve system. Moreover, the analysis of the two-Bob system can be extended to a multi-Bob scenario, in which the effect of the interference from multiple Bobs to the Eve, as well as the mutual interference among Bobs, are taken into account.}. Note that, compared to the one-Bob-one-Eve PLS system, Bob 2 not only acts as an additional eavesdropped target for Eve, but also serves as an interference source for both Bob 1 and Eve. In particular, its interference to Eve can be used to effectively mitigate the eavesdropping rate of Eve, thereby introducing new effects on the secure-transmission region and the sum-secrecy-rate performance.
%impact of introducing another Bob on the information transmission security 

For the two-Bob-one-Eve system, when the MRT-based beamforing is applied, the achievable-secrecy-rates of Bob 1 and Bob 2 are respectively given by 
\begin{align}
    R_{{\rm S},1}&\!=\!\left[\!\log _2\!\left(1\!+\!\frac{N P_{{\rm B},1}|h_{{\rm B},1}|^2}{NP_{{\rm B},2}|h_{{\rm B},1}|^2\left|\mathbf{a}^H(\theta_{{\rm B},1})\mathbf{a}(\theta_{{\rm B},2})\right|^2+\sigma^2}\right)\! \right. \nonumber\\
     &\left.-\!\log _2\!\left(\!1\!+\!\frac{N P_{{\rm B},1}|h_{{\rm E}}|^2\left|\mathbf{b}^H(\theta_{{\rm E}},r_{{\rm E}})\mathbf{a}(\theta_{{\rm B},1})\right|^2}{NP_{{\rm B},2}|h_{{\rm E}}|^2\left|\mathbf{b}^H(\theta_{{\rm E}},r_{{\rm E}})\mathbf{a}(\theta_{{\rm B},2})\right|^2+\sigma^2}\!\right)\!\right]^{+}\!,\label{TwoBobRs1}\\
     R_{{\rm S},2}&\!=\!\left[ \!\log _2\!\left(\!1\!+\!\frac{N P_{{\rm B},2}|h_{{\rm B},2}|^2}{NP_{{\rm B},1}|h_{{\rm B},2}|^2\left|\mathbf{a}^H(\theta_{{\rm B},2})\mathbf{a}(\theta_{{\rm B},1})\right|^2+\sigma^2}\!\right)\! \right. \nonumber\\
     &\left.-\!\log _2\!\left(\!1\!+\!\frac{N P_{{\rm B},2}|h_{{\rm E}}|^2\left|\mathbf{b}^H(\theta_{{\rm E}},r_{{\rm E}})\mathbf{a}(\theta_{{\rm B},2})\right|^2}{NP_{{\rm B},1}|h_{{\rm E}}|^2\left|\mathbf{b}^H(\theta_{{\rm E}},r_{{\rm E}})\mathbf{a}(\theta_{{\rm B},1})\right|^2+\sigma^2}\!\right)\!\right]^{+}\!. \label{TwoBobRs2}
\end{align}
Then, Problem (P2) reduces to
\begin{align}
    \text{(P3)}\;\; \max_{\mathbf{p}} \quad & R_{{\rm S},1}+R_{{\rm S},2} \label{P2sumrate}\\
    \text{s.t.}\;\;\;\;&P_{{\rm B},1}+P_{{\rm B},2} \leq P_{\rm tot}. \label{P2powercons}
\end{align}
The optimal solution to Problem (P3) can be obtained by a one-dimensional search for $P_{{\rm B},1}$ and $P_{{\rm B},2}$. To obtain useful insights, we first characterize the secure-transmission region of the considered system and then obtain approximate solution to Problem (P3) in closed form under high-SNR condtions. Note that, compared with the performance analysis for the one-Bob-One-Eve system, the secure-transmission region in this case is more complicated since the secrecy rates of both two Bobs are jointly affected by the power allocation of the BS and the locations of Eve, Bob 1 and Bob 2 (see \eqref{TwoBobRs1} and \eqref{TwoBobRs2}).

\subsubsection{Insecure-transmission region of Bobs}
In the following, we characterize the insecure-transmission region of Bob 1, while the same analytical method can be applied for that of Bob 2. To this end, we first obtain the secure-transmission condition of Bob 1, which is given as follows.

\begin{lemma}  \label{securecond_multiuser1}\rm(Secure-transmission condition of Bob 1)
   In the two-Bob-one-Eve system, the data transmissions to Bob 1 are secure when
    \begin{equation}\label{seccon_for_multiuser}
    \begin{cases} 
        &\left|\mathbf{b}^H\left(\theta_{\rm E},r_{\rm E}\right)\mathbf{a}\left(\theta_{{\rm B},1}\right)\right|^2\!< \! \frac{r^2_{\rm E}\!+\!\frac{\beta NP_{{\rm B},2}}{\sigma^2}\!\left|\mathbf{b}^H\(\theta_{\rm E},r_{\rm E}\)\mathbf{a}\(\theta_{{\rm B},2}\)\right|^2}{r^2_{{\rm B},1}}, \\ &\qquad\qquad\qquad\qquad\qquad\qquad\qquad\qquad {\rm if}\;  \theta_{{\rm B},1}\neq\theta_{{\rm B},2},%\label{seccon_for_multiuser_neq}
\\
      &\left|\mathbf{b}^H\left(\theta_{\rm E},r_{\rm E}\right)\mathbf{a}\left(\theta_{{\rm B},1}\right)\right|^2<\frac{r^2_{\rm E}}{r^2_{{\rm B},1}}, \;\;\;\;\qquad {\rm if} \; \theta_{{\rm B},1}=\theta_{{\rm B},2}. %\label{seccon_for_multiuser_eq}
   \end{cases}
   \end{equation}
\end{lemma}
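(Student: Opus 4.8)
The plan is to convert the positivity of the secrecy rate $R_{{\rm S},1}$ in \eqref{TwoBobRs1} into one explicit inequality among the relevant channel--beam correlations, and then read off the two branches of \eqref{seccon_for_multiuser} by specializing to $\theta_{{\rm B},1}\neq\theta_{{\rm B},2}$ and to $\theta_{{\rm B},1}=\theta_{{\rm B},2}$. First I would observe that, since $\log_2(\cdot)$ is strictly increasing and $[x]^+>0$ iff $x>0$, the link to Bob 1 is secure exactly when $R_{{\rm B},1}>R_{{\rm E},1}$, i.e., when the SINR at Bob 1 exceeds the eavesdropping SINR at Eve. (If $P_{{\rm B},1}=0$ then $R_{{\rm S},1}=0$ and the link is insecure, so one may assume $P_{{\rm B},1}>0$ and cancel the common factor $NP_{{\rm B},1}$ from the two SINRs.) Then I would substitute the LoS channel gains $|h_{{\rm B},1}|^2=\beta/r_{{\rm B},1}^2$ and $|h_{\rm E}|^2=\beta/r_{\rm E}^2$ and abbreviate $\rho_1\triangleq|\mathbf{b}^H(\theta_{\rm E},r_{\rm E})\mathbf{a}(\theta_{{\rm B},1})|^2$, $\rho_2\triangleq|\mathbf{b}^H(\theta_{\rm E},r_{\rm E})\mathbf{a}(\theta_{{\rm B},2})|^2$, $\rho_{12}\triangleq|\mathbf{a}^H(\theta_{{\rm B},1})\mathbf{a}(\theta_{{\rm B},2})|^2$ and $\gamma\triangleq\beta NP_{{\rm B},2}/\sigma^2$; the inequality $R_{{\rm B},1}>R_{{\rm E},1}$ then becomes a comparison of two SINR-type fractions, both of whose denominators contain the additive $\sigma^2$ term and are hence strictly positive.

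Since both denominators are positive, the next step is to cross-multiply, clear the common factors $\beta$ and $r_{{\rm B},1}^2 r_{\rm E}^2$, normalize by $\sigma^2$, and rearrange; the comparison then collapses to the single secure-transmission condition
\[
\rho_1\,r_{{\rm B},1}^2 \;<\; r_{\rm E}^2 \;+\; \gamma\bigl(\rho_2-\rho_1\rho_{12}\bigr).
\]
The two branches of \eqref{seccon_for_multiuser} follow by specialization. When $\theta_{{\rm B},1}\neq\theta_{{\rm B},2}$, the far-field steering vectors of the XL-array are near-orthogonal --- $\mathbf{a}^H(\theta_{{\rm B},1})\mathbf{a}(\theta_{{\rm B},2})$ is a normalized Dirichlet kernel whose magnitude is of order $1/N$ away from the mainlobe --- so $\rho_{12}\approx 0$, and dropping the $\rho_1\rho_{12}$ term gives $\rho_1<\bigl(r_{\rm E}^2+\gamma\rho_2\bigr)/r_{{\rm B},1}^2$, i.e., the first branch. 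When $\theta_{{\rm B},1}=\theta_{{\rm B},2}$, we have $\mathbf{a}(\theta_{{\rm B},1})=\mathbf{a}(\theta_{{\rm B},2})$, hence $\rho_1=\rho_2$ and $\rho_{12}=1$, so $\rho_2-\rho_1\rho_{12}=0$ and the condition reduces to $\rho_1<r_{\rm E}^2/r_{{\rm B},1}^2$, which is the second branch; as a consistency check, this coincides with Lemma \ref{securitycon_initial}, as it should, since in this degenerate case Bob 2's power reaches Bob 1 and Eve through identical channels and hence cancels in the secrecy comparison.

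The algebra above is elementary, so the only point I expect to require care is the beam near-orthogonality invoked in the first branch: it is an approximation, but one justified by the large array size and fully consistent with the LoS/high-frequency modeling assumption already adopted in this section; keeping the exact $\rho_1\rho_{12}$ term would merely give a slightly sharper but less transparent condition. The structural message worth emphasizing once the lemma is in hand is the contrast with the one-Bob condition in \eqref{SecuCond}: the admissible threshold on the Eve--Bob-1 correlation $\rho_1$ is \emph{inflated} by $\gamma\rho_2$, which quantifies how the power allocated to Bob 2 leaks onto the near-field Eve via the energy-spread effect and thereby enlarges the secure region of Bob 1.
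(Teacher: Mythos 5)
Your proposal is correct and follows essentially the same route as the paper's Appendix B: both reduce security of Bob 1 to the SINR comparison, arrive at the exact condition $\rho_1<\bigl(r_{\rm E}^2+\gamma\rho_2\bigr)/\bigl(r_{{\rm B},1}^2+\gamma\rho_{12}\bigr)$ (your cross-multiplied form is an equivalent rearrangement), and then set $\rho_{12}\approx 0$ for $\theta_{{\rm B},1}\neq\theta_{{\rm B},2}$ and $\rho_{12}=1$, $\rho_1=\rho_2$ for $\theta_{{\rm B},1}=\theta_{{\rm B},2}$. Your explicit justification of the near-orthogonality step via the Dirichlet-kernel decay and the remark on the degenerate case are welcome additions but do not change the argument.
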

\begin{proof}
   Please refer to Appendix \ref{proof_resecure_condition}.
\end{proof}

\begin{remark} \rm Several key insights into the secure-transmission condition of Bob 1 are drawn as follows when Bob 2 is added into the system. 
\begin{itemize}
    \item (Effect of Bob 2) \textbf{Lemma \ref{securecond_multiuser1}} indicates that, when $\theta_{{\rm B},1}\neq\theta_{{\rm B},2}$, an additional interference term from Bob 2 to Eve, i.e., $\frac{\beta NP_{{\rm B},2}}{\sigma^2}\!\left|\mathbf{b}^H\left(\theta_{\rm E},r_{\rm E}\right)\mathbf{a}\left(\theta_{{\rm B},2}\right)\right|^2$, is introduced in the secure-transmission condition of Bob 1 in the two-Bob system, as compared to that in the one-Bob system given in \eqref{SecuCond}. Moreover, it makes the data transmissions to Bob 1 secure when the interference from Bob 2 to Eve is sufficiently large. On the other hand, when $\theta_{{\rm B},1}=\theta_{{\rm B},2}$, the secure-transmission condition of Bob 1 in \eqref{seccon_for_multiuser} reduces to that in \eqref{SecuCond} for the one-Bob system. In this case, Bob 2 does not affect too much the secure-transmission condition of Bob 1.
    \item (Effect of power allocation) The interferences from Bob 2 to Bob 1 and Eve in \eqref{TwoBobRs1} are affected by the power allocated to Bob 2. Generally, a larger power allocated to Bob 2 results in both enhanced interferences to Bob 1 and Eve, which not only reduces the eavesdropping rate of Eve for intercepting the data transmitted to Bob 1, but also reduces the information rate of Bob 1. Therefore, there generally exists a trade-off in power allocation between improving the information rate of Bob 1 and suppressing the eavesdropping rate on Bob 1.
\end{itemize}
\end{remark}

Next, we further characterize the insecure-transmission region of Bob 1. According to \textbf{Lemma \ref{securecond_multiuser1}}, when $\theta_{{\rm B},1} =\theta_{{\rm B},2}$,  the insecure-transmission region of Bob 1 coincides with that in the one-Bob-one-Eve scenario. Therefore, we only present the result for the case where $\theta_{{\rm B},1} \neq \theta_{{\rm B},2}$.

\begin{proposition} \label{insecure_region_2Bob}
\rm Given the locations of Eve $(\theta_{\rm E},r_{\rm E})$ (and hence fixed $\beta_2$) and Bob 2 $(\theta_{{\rm B},2},r_{{\rm B},2})$, when $\theta_{{\rm B},1} \neq \theta_{{\rm B},2}$, the insecure-transmission region of Bob 1 is characterized by
\begin{align}
    \mathcal{A}^{\prime}(\theta_{\rm E},r_{\rm E})=\{(\theta_{{\rm B},1}, r_{{\rm B},1}) \mid \theta_{{\rm B},1}\! \in \! \Xi^{\prime}_{\theta}, {\rm and}\; r_{{\rm B},1} \!> \!Z\}, \label{tot_insecure_region_B1}
\end{align}
where the angular region (denoted as $\Xi^{\prime}_{\theta}$) is given by
	\begin{align}
		&\Xi^{\prime}_{\theta}\!=\!\left[\!\theta_{\rm E}\!-\!|\beta^{\rm +}_1(\!\Lambda^{\prime}\!)|\sqrt{\frac{d(1\!-\!\theta^2_{\rm E})}{r_{\rm E}}}\!, \theta_{\rm E}\!+\!|\beta^{\rm -}_1(\!\Lambda^{\prime}\!)|\sqrt{\frac{d(1\!-\!\theta^2_{\rm E})}{r_{\rm E}}}\!\right]\!, \label{insecure_thetaB1}
       % &\Xi^{\prime}_{r} = \left[Z, + \infty\right],
	\end{align} 
	with $\Lambda^{\prime}\triangleq\sqrt{\left(\frac{\beta NP_{{\rm B},2}}{\sigma^2}\!\left|\mathbf{b}^H\(\theta_{\rm E},r_{\rm E}\)\mathbf{a}\(\theta_{{\rm B},2}\)\right|^2\!+\!r^2_{\rm E}\right)/r^2_{{\rm B},1}}$. In addition, $\beta^{\rm -}_1(\!\Lambda^{\prime}\!)<0$ and $\beta^{\rm +}_1(\!\Lambda^{\prime}\!)>0$ are the two solutions for $\beta_1$ such that $G(\beta_1,\beta_2)=\Lambda^{\prime}$, respectively. 
\end{proposition}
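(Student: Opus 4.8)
\textbf{Proof proposal for Proposition \ref{insecure_region_2Bob}.}

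The plan is to mirror the argument used for Proposition \ref{SecuAnguRegi}, reducing the secure-transmission condition of Bob 1 in the two-Bob case to a level-set condition on the correlation function $G(\beta_1,\beta_2)$. First I would start from the secure-transmission condition \eqref{seccon_for_multiuser} of \textbf{Lemma \ref{securecond_multiuser1}} for the case $\theta_{{\rm B},1}\neq\theta_{{\rm B},2}$, namely $\left|\mathbf{b}^H(\theta_{\rm E},r_{\rm E})\mathbf{a}(\theta_{{\rm B},1})\right|^2 < \bigl(r^2_{\rm E}+\frac{\beta N P_{{\rm B},2}}{\sigma^2}\left|\mathbf{b}^H(\theta_{\rm E},r_{\rm E})\mathbf{a}(\theta_{{\rm B},2})\right|^2\bigr)/r^2_{{\rm B},1}$. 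Negating this gives the insecure condition, and taking square roots on both sides (both sides are nonnegative) rewrites it as $\eta(\theta_{{\rm B},1};\theta_{\rm E},r_{\rm E}) = \left|\mathbf{b}^H(\theta_{\rm E},r_{\rm E})\mathbf{a}(\theta_{{\rm B},1})\right| \ge \Lambda^{\prime}$, where $\Lambda^{\prime}$ is exactly the quantity defined in the statement. The key observation is that the right-hand side $\Lambda^{\prime}$ does not depend on $\theta_{{\rm B},1}$ (it depends only on the fixed locations of Eve and Bob 2 and on $P_{{\rm B},2}$ and $r_{{\rm B},1}$), so this is structurally identical to the one-Bob case with $\Lambda$ replaced by $\Lambda^{\prime}$.

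Next I would invoke \textbf{Lemma \ref{correlation}} to replace $\eta(\theta_{{\rm B},1};\theta_{\rm E},r_{\rm E})$ by $G(\beta_1,\beta_2)$ with $\beta_1 = (\theta_{\rm E}-\theta_{{\rm B},1})/\sqrt{d(1-\theta^2_{\rm E})/r_{\rm E}}$ and $\beta_2 = \frac{N}{2}\sqrt{d(1-\theta^2_{\rm E})/r_{\rm E}}$ fixed. The insecure condition becomes $G(\beta_1,\beta_2)\ge \Lambda^{\prime}$. By the monotonicity/unimodality properties of $G(\cdot,\beta_2)$ recalled in \textbf{Remark \ref{propertyofcorre}} — for fixed $\beta_2$, $G$ attains its maximum at $\beta_1=0$ and the super-level set $\{G(\beta_1,\beta_2)\ge\Lambda^{\prime}\}$ is the interval $[\beta^{\rm -}_1(\Lambda^{\prime}),\beta^{\rm +}_1(\Lambda^{\prime})]$ whenever $\Lambda^{\prime}<G(0,\beta_2)$ — the set of admissible $\beta_1$ is precisely this interval. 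Translating back through $\theta_{{\rm B},1}=\theta_{\rm E}-\beta_1\sqrt{d(1-\theta^2_{\rm E})/r_{\rm E}}$ (note the sign flip swaps the roles of $\beta^{\rm +}_1$ and $\beta^{\rm -}_1$ as endpoints, exactly as in \eqref{insecure_thetaB}) yields the angular region $\Xi^{\prime}_{\theta}$ in \eqref{insecure_thetaB1}. Finally, the range constraint $r_{{\rm B},1}>Z$ comes from the requirement that Bob 1 actually lies in the far-field region so that the UPW channel model and the correlation approximation of \textbf{Lemma \ref{correlation}} are valid; combining the angular and range constraints gives $\mathcal{A}^{\prime}(\theta_{\rm E},r_{\rm E})$ as in \eqref{tot_insecure_region_B1}.

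I expect the main subtlety — rather than a genuine obstacle — to be the careful bookkeeping of which endpoint of the $\beta_1$-interval maps to which endpoint of the $\theta_{{\rm B},1}$-interval under the affine change of variables, together with verifying that $\Lambda^{\prime}$ is indeed $\theta_{{\rm B},1}$-independent so that the level-set argument applies verbatim. One should also note implicitly that $r_{{\rm B},1}$ appears inside $\Lambda^{\prime}$, so strictly speaking the interval $\Xi^{\prime}_{\theta}$ depends on $r_{{\rm B},1}$ as well; this is consistent with \eqref{insecure_thetaB1} as written and matches the ``effect of the distance of Bob'' discussion in \textbf{Remark \ref{insecure_region_analyse}}. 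A remark that the case $\theta_{{\rm B},1}=\theta_{{\rm B},2}$ reduces to \textbf{Proposition \ref{SecuAnguRegi}} (already noted before the statement) completes the picture. Since each step is an application of an already-established lemma, the proof is short and I would simply defer the formal write-up to an appendix, as the paper does.
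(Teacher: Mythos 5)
Your proposal is correct and follows essentially the same route as the paper, which omits this proof as being analogous to that of \textbf{Proposition \ref{SecuAnguRegi}}: negate the secure-transmission condition of \textbf{Lemma \ref{securecond_multiuser1}}, take square roots to get $\eta \ge \Lambda^{\prime}$, apply \textbf{Lemma \ref{correlation}} and the level-set properties of $G(\cdot,\beta_2)$ from \textbf{Remark \ref{propertyofcorre}}, and translate the $\beta_1$-interval back to $\theta_{{\rm B},1}$ with the far-field constraint $r_{{\rm B},1}>Z$. Your bookkeeping of the endpoint swap under the affine map and your observation that $\Lambda^{\prime}$ is $\theta_{{\rm B},1}$-independent (but $r_{{\rm B},1}$-dependent) are both accurate.
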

\begin{proof}
   The proof is similar to that in \textbf{Proposition \ref{SecuAnguRegi}} and hence is omitted for brevity.
\end{proof}

Note that, compared with the insecure-transmission region of Bob in the one-Bob system, Bob 2 introduces additional interferences to Eve, which is given by $\frac{\beta NP_{{\rm B},2}}{\sigma^2}\left|\mathbf{b}^H\left(\theta_{\rm E},r_{\rm E}\right)\mathbf{a}\left(\theta_{{\rm B},2}\right)\right|^2$. This results in a larger value of $\Lambda^{\prime}$ in \eqref{insecure_thetaB1} compared to $\Lambda$ in \eqref{insecure_thetaB} under the one-Bob case, thereby leading to diminished insecure-transmission angular region according to \textbf{Lemma \ref{propertyofcorre}}. Moreover, this interference term is influenced by the location of Bob 2, i.e., $(\theta_{{\rm B},2}, r_{{\rm B},2})$, and the power allocated to it, i.e., $P_{{\rm B},2}$.

\begin{remark} \label{analyzeforBob1}\rm
With the introduction of Bob 2 and $\theta_{{\rm B},1} \neq \theta_{{\rm B},2}$, several interesting insights into the insecure-transmission region of Bob 1 are drawn as follows.
    \begin{itemize}
    \item (Effect of the location of Bob 2) The insecure-transmission angular region of Bob 1 gets narrower as the angular difference between Bob 2 and Eve decreases. This is due to the increased interference power from Bob 2 to Eve, which is leveraged to prevent the eavesdropping of Eve on Bob 1.
    \item (Effect of power allocation) When the power allocation to Bob 2 increases, the Bob 2 $\rightarrow$ Eve interference power becomes stronger, thus leading to a smaller insecure-transmission region. 
    \end{itemize}
\end{remark}

\subsubsection{Power allocation and secrecy rate} 
Given the secure-transmission condition for the two-Bob-one-Eve system, we further optimize the power allocation to the two Bobs and obtain the maximum sum-secrecy-rate. It is challenging to obtain the optimal power allocation solution to Problem (P3) in closed form, which, however, can generally be obtained via exhaustive search. To gain useful insights, we further consider two cases. Case I: The data transmission to Bob 1 is insecure, while that to Bob 2 is secure, and Case II: The data transmissions to both Bobs are secure; while the scenario where the data transmissions to both Bobs are insecure is not considered, as allocating power to them in such a scenario would be ineffective. Under the high-SNR assumption, we present the corresponding approximate closed-form solutions for the power allocation to the two Bobs in the above considered cases as follows.

\begin{proposition}\label{optimal_poweralloc} \rm
In high-SNR regimes, when $\theta_{\rm B,1} \neq \theta_{\rm B,2}$, the solution for Problem (P3) is given as follows.
   
\textbf{Case I:} In this case, the solution to the power allocation is given by
    \begin{equation}\label{powerallc_case2}
    \begin{cases} 
      P^*_{\rm B,1} &= 0, \\
      P^*_{\rm B,2} &= P_{\rm tot}.
   \end{cases}
   \end{equation}

\textbf{Case II:} In this case, the optimized power allocation can be further categorized into the following two cases.
\begin{itemize}
    \item When $\left|\!\mathbf{b}^H\!\left(\!\theta_{\rm E}, r_{\rm E}\!\right)\!\mathbf{a}\!\left(\!\theta_{\rm B,2}\!\right)\!\right|\!\rightarrow\!0$ and $\left|\!\mathbf{b}^H\!\left(\!\theta_{\rm E}, r_{\rm E}\!\right)\!\mathbf{a}\!\left(\!\theta_{\rm B,1}\!\right)\!\right|\!\rightarrow\!0$, no data is intercepted by Eve. As such, the optimal solution for the power allocation is given by
        \begin{equation}
    	\begin{cases} \label{powerallca_case1}
    		P^*_{{\rm B},1} &= \frac{1}{\gamma}-\frac{\sigma^2}{N|h_{{\rm B},1}|^2}, \\
    		P^*_{{\rm B},2} &= \frac{1}{\gamma}-\frac{\sigma^2}{N|h_{{\rm B},2}|^2},
    	\end{cases}
    \end{equation}
    where $\gamma=2/\left(P+\frac{\sigma^2}{N|h_{{\rm B},1}|^2}+\frac{\sigma^2}{N|h_{{\rm B},2}|^2}\right)$.
    \item When $\left|\!\mathbf{b}^H\!\left(\!\theta_{\rm E}, r_{\rm E}\!\right)\!\mathbf{a}\!\left(\!\theta_{\rm B,2}\!\right)\!\right|$ and $\left|\!\mathbf{b}^H\!\left(\!\theta_{\rm E}, r_{\rm E}\!\right)\!\mathbf{a}\!\left(\!\theta_{\rm B,1}\!\right)\!\right|$ are sufficiently large such that the noise power at Eve can be neglected, the approximate solution for the optimal power allocation is given by 
    \begin{equation}\label{poweralloc_case3}
    \begin{cases} 
      P^*_{\rm B,1} \!\!\!\!\!&\approx \hat{P}_{\rm B,1} \triangleq \frac{P_{\rm tot}\left|\mathbf{b}^H\left(\theta_{\rm E},r_{\rm E}\right)\mathbf{a}\left(\theta_{\rm B,2}\right)\right|}{\left|\mathbf{b}^H\left(\theta_{\rm E},r_{\rm E}\right)\mathbf{a}\left(\theta_{\rm B,1}\right)\right|+\left|\mathbf{b}^H\left(\theta_{\rm E},r_{\rm E}\right)\mathbf{a}\left(\theta_{\rm B,2}\right)\right|},\\
      P^*_{\rm B,2} \!\!\!\!\!&\approx \hat{P}_{\rm B,2} \triangleq \frac{P_{\rm tot}\left|\mathbf{b}^H\left(\theta_{\rm E},r_{\rm E}\right)\mathbf{a}\left(\theta_{\rm B,1}\right)\right|}{\left|\mathbf{b}^H\left(\theta_{\rm E},r_{\rm E}\right)\mathbf{a}\left(\theta_{\rm B,1}\right)\right|+\left|\mathbf{b}^H\left(\theta_{\rm E},r_{\rm E}\right)\mathbf{a}\left(\theta_{\rm B,2}\right)\right|}.
   \end{cases}
   \end{equation}
\end{itemize}
\end{proposition}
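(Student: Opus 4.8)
The plan is to simplify the two secrecy rates in \eqref{TwoBobRs1}--\eqref{TwoBobRs2}, then handle the three regimes separately, in each collapsing the non-convex bivariate problem (P3) to a scalar optimization by writing $P_{{\rm B},2}=P_{\rm tot}-P_{{\rm B},1}$ (the budget is used up at the optimum). Two reductions drive everything. First, since Bob~1 and Bob~2 are far-field users at distinct angles with $N$ large, $|\mathbf{a}^H(\theta_{{\rm B},1})\mathbf{a}(\theta_{{\rm B},2})|^2=\mathcal{O}(1/N^2)$, so the inter-Bob interference is negligible and $R_{{\rm B},k}\approx\log_2\!\big(1+P_{{\rm B},k}N|h_{{\rm B},k}|^2/\sigma^2\big)$. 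Second, writing $\gamma_{\rm E}\triangleq N|h_{\rm E}|^2/\sigma^2$ and $\rho_k\triangleq|\mathbf{b}^H(\theta_{\rm E},r_{\rm E})\mathbf{a}(\theta_{{\rm B},k})|$, the eavesdropping rate on Bob~$k$ is $R_{{\rm E},k}=\log_2\!\big(1+P_{{\rm B},k}\gamma_{\rm E}\rho_k^2/(P_{{\rm B},\bar k}\gamma_{\rm E}\rho_{\bar k}^2+1)\big)$, with $\bar k$ the other Bob. Within each regime the clipping operators $[\cdot]^+$ are inactive on the rate-carrying user(s), so the objective is smooth there and calculus applies.

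\emph{Case~I.} Here the transmission to Bob~1 is insecure, so $R_{{\rm S},1}=[R_{{\rm B},1}-R_{{\rm E},1}]^+=0$ regardless of $P_{{\rm B},1}$, and the sum-secrecy-rate is just $R_{{\rm S},2}$. Since Bob~1 is not to be secured, we set $P_{{\rm B},1}=0$; then, with inter-Bob interference negligible and no residual jamming term, $R_{{\rm S},2}$ is increasing in $P_{{\rm B},2}$, so the whole budget goes to Bob~2, giving $P^*_{{\rm B},1}=0$, $P^*_{{\rm B},2}=P_{\rm tot}$. The subtlety --- that a strictly global optimum of (P3) might instead put a small positive $P_{{\rm B},1}$ to jam Eve and raise $R_{{\rm S},2}$, at the price of leaking Bob~1's data --- is what makes Case~I the delicate one; pinning down when that jamming gain is negligible (it is, in the high-SNR reading adopted here, and when Bob~1's insecurity is geometric, i.e.\ holds for every feasible split per Lemma~\ref{securecond_multiuser1}) is the main obstacle.

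\emph{Case~II with $\rho_1,\rho_2\to 0$.} Now $R_{{\rm E},k}=0$, so $R_{{\rm S},k}=R_{{\rm B},k}\ge 0$ and (P3) reduces to the concave water-filling problem $\max\{\sum_k\log_2(1+P_{{\rm B},k}N|h_{{\rm B},k}|^2/\sigma^2):P_{{\rm B},1}+P_{{\rm B},2}=P_{\rm tot}\}$. Its KKT conditions give the standard solution $P^*_{{\rm B},k}=1/\gamma-\sigma^2/(N|h_{{\rm B},k}|^2)$, with the water level determined by the power equality as $\gamma=2/\big(P_{\rm tot}+\sigma^2/(N|h_{{\rm B},1}|^2)+\sigma^2/(N|h_{{\rm B},2}|^2)\big)$, which is exactly \eqref{powerallca_case1}; in the high-SNR regime $1/\gamma>\sigma^2/(N|h_{{\rm B},k}|^2)$ for both $k$, so neither power is clipped to zero.

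\emph{Case~II with $\rho_1,\rho_2$ large.} Neglecting the noise at Eve gives $R_{{\rm E},1}\approx\log_2\!\big((P_{{\rm B},1}\rho_1^2+P_{{\rm B},2}\rho_2^2)/(P_{{\rm B},2}\rho_2^2)\big)$ and symmetrically $R_{{\rm E},2}$; combining with the high-SNR form $\log_2(1+z)\approx\log_2 z$ for $R_{{\rm B},1},R_{{\rm B},2}$, the sum $R_{{\rm S},1}+R_{{\rm S},2}$ equals a term independent of $\mathbf{p}$ plus $2\log_2\!\big(P_{{\rm B},1}P_{{\rm B},2}/(P_{{\rm B},1}\rho_1^2+P_{{\rm B},2}\rho_2^2)\big)$. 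It remains to maximize $\psi(P_{{\rm B},1})\triangleq P_{{\rm B},1}(P_{\rm tot}-P_{{\rm B},1})/\big(P_{{\rm B},1}\rho_1^2+(P_{\rm tot}-P_{{\rm B},1})\rho_2^2\big)$ over $(0,P_{\rm tot})$; $\psi'=0$, after clearing denominators, becomes $(\rho_1^2-\rho_2^2)P_{{\rm B},1}^2+2\rho_2^2P_{\rm tot}P_{{\rm B},1}-\rho_2^2P_{\rm tot}^2=0$, whose relevant root simplifies to $P^*_{{\rm B},1}=P_{\rm tot}\rho_2/(\rho_1+\rho_2)$. Since $\psi$ vanishes at both endpoints and is positive inside, this is the maximizer, and $P^*_{{\rm B},2}=P_{\rm tot}\rho_1/(\rho_1+\rho_2)$, i.e.\ \eqref{poweralloc_case3} written with $\rho_k=|\mathbf{b}^H(\theta_{\rm E},r_{\rm E})\mathbf{a}(\theta_{{\rm B},k})|$; substituting this allocation into Lemma~\ref{securecond_multiuser1} confirms both users stay secure in the high-SNR regime. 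Beyond Case~I, the remaining work is bounding the error of the high-SNR and negligible-Eve-noise approximations so that these closed forms are provably near-optimal, which I would carry out by large-SNR (and large-$N$) asymptotics.
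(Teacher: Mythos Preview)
Your proof is correct and follows the same overall strategy as the paper's: identical high-SNR and negligible-inter-Bob-interference simplifications, the same three-regime casework, and standard waterfilling for the $\rho_1,\rho_2\to 0$ sub-case. The one substantive difference is in the final sub-case (both $\rho_k$ large). There, the paper writes the approximate sum-secrecy-rate as $\log_2(A/B)$ with $A$ quartic and $B$ quadratic in $P_{{\rm B},1}$, differentiates $A/B$, and factors the resulting quintic numerator into five linear factors before discarding the infeasible roots. You instead first collapse the sum of logs to (up to an additive constant) $2\log_2\psi(P_{{\rm B},1})$, so the stationarity condition becomes a single quadratic whose feasible root $P_{\rm tot}\rho_2/(\rho_1+\rho_2)$ drops out directly. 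The two routes are equivalent---your $\psi^2$ is exactly the paper's $A/B$ up to multiplicative constants---but yours sidesteps the heavier polynomial algebra, while the paper's route makes the full root structure explicit and so verifies by inspection that the remaining stationary points lie outside $[0,P_{\rm tot}]$. Your added caveat in Case~I (that a nonzero $P_{{\rm B},1}$ could in principle serve as jamming) is a point the paper does not discuss; your resolution---that Case~I should be read as Bob~1 being insecure for every feasible split in the sense of Lemma~\ref{securecond_multiuser1}---is the correct interpretation.
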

\begin{proof}
 Please refer to Appendix \ref{solutioncaseIII}
\end{proof}
For Case II,  when $\left|\!\mathbf{b}^H\!\left(\!\theta_{\rm E}, r_{\rm E}\!\right)\!\mathbf{a}\!\left(\!\theta_{\rm B,2}\!\right)\!\right|$ and $\left|\!\mathbf{b}^H\!\left(\!\theta_{\rm E}, r_{\rm E}\!\right)\!\mathbf{a}\!\left(\!\theta_{\rm B,1}\!\right)\!\right|$ are sufficiently large, the power allocation for the two Bobs is determined by the channel correlation between the Bobs and Eve. Specifically, taking Bob 2 as an example, as the channel correlation between Bob 2 and Eve decreases, the information leakage from Bob 2 to Eve diminishes; thus, a larger power should be allocated to Bob 2 to enhance sum-secrecy-rate.

Note that it is difficult to obtain the optimal power allocation solution for the case of $\theta_{\rm B,1} = \theta_{\rm B,2}$ in closed form. Therefore, we present numerical results for it in the following example.

\begin{figure*}[t]
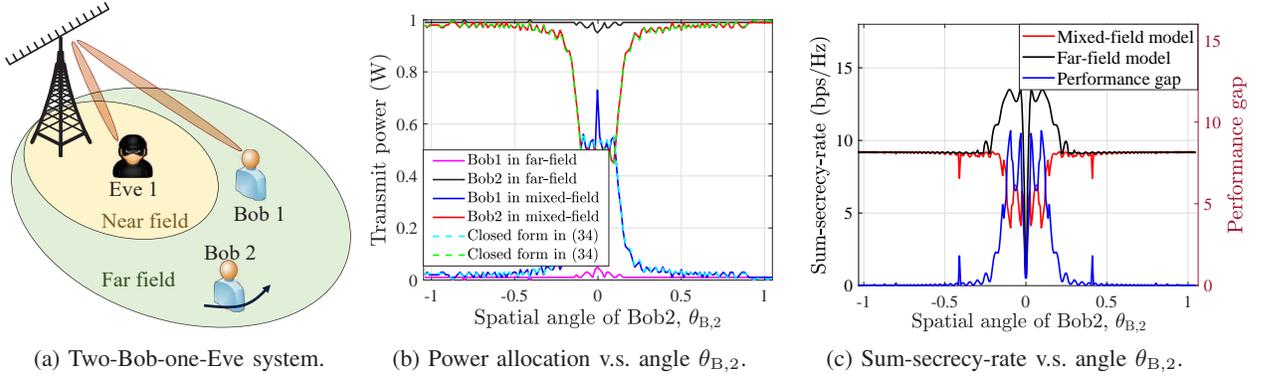

    \centering
     \begin{subfigure}[b]{0.3\linewidth}
     \centering
    \includegraphics[width=1\linewidth,height=4.5cm]{TwoBob_system.eps}%{system_model_mixed_field_twoBob.eps}
        \caption{Two-Bob-one-Eve system.}
        \label{system_two user}
       % \vspace{1em}
    \end{subfigure}
    \hspace{-0.5cm}
    \begin{subfigure}[b]{0.3\linewidth}
    \centering
        \includegraphics[width=1.1\linewidth]{powerallocation_NinN_vs_FinN_20250226.eps}
        \caption{Power allocation v.s. angle $\theta_{{\rm B},2}$.}
        \label{powerallocation}
    \end{subfigure}
    \hspace{0.1cm}
    \begin{subfigure}[b]{0.3\linewidth}
    \centering
        \includegraphics[width=1.1\linewidth]{sum_secrecy_rate_NinN_vs_FinN_20250226.eps}
        \caption{Sum-secrecy-rate v.s. angle $\theta_{{\rm B},2}$.}
        \label{secrecyrate_twouser_vsangle2}
    \end{subfigure}
    \caption{Illustration of the power allocation designs based on the mixed-field and far-field channel models, and their corresponding performance.}
    \label{simulation_result_poweralloc}
    	\vspace{-7pt}
\end{figure*}
\begin{example}\label{secrecyrate_twoBob}\rm 
(Power allocation and secrecy rate) A mixed-field PLS system is considered, where $N=256$, $f=30$ GHz, Eve is located at (0 rad, 5 m), Bob 1 is located at (0 rad, 229 m), and the distance of Bob 2 is $r_{\rm B,2}=164$ m. We plot the optimal power allocation for Bob 1 and Bob 2 v.s. the spatial angle of Bob 2 in Fig. \ref{simulation_result_poweralloc}(b), and the corresponding achievable sum-secrecy-rate in Fig. \ref{simulation_result_poweralloc}(c). Two schemes are considered where the channels of Bob and Eve are accurately modeled based on USW (referred to as the mixed-field case) and approximately modeled by UPWs (referred to as the far-field case). First, it is observed from Fig. \ref{simulation_result_poweralloc}(b) that the power allocation in the mixed-field case obtained through the exhaustive search method matches well with the optimized power allocation obtained in \eqref{poweralloc_case3} for the case where $\theta_{\rm B,2}\neq\theta_{\rm B,1}$. Next, it is observed from Fig. \ref{simulation_result_poweralloc}(b) that in the mixed-field case, as the angular difference between Bob 2 and Eve decreases, a higher power should be allocated to Bob 1 due to the increasing interference power from Bob 2 to Eve, which effectively prevents the eavesdropping of Eve on Bob 1. In contrast, in far-field case, nearly no power is allocated to Bob 1 unless Bob 2 and Eve are at the same spatial angle. Therefore, as shown in Fig. \ref{simulation_result_poweralloc}(c), when Bob 2 is located within a certain angular region near Eve (i.e., around 0 rad), the far-field case exhibits a noticeable secrecy performance gap compared to the mixed-field case. This is due to inefficient power allocation in the far-field case, where Bob 2 can not effectively mitigate eavesdropping capacity of Eve unless they are located at the same angle. Last, when Bob 2 and Eve are at the same spatial angle, i.e., $\theta_{\rm B,2}=\theta_{\rm B,1}$, the significant interference power between Bob 1 and Bob 2 results in poor secrecy rate performance, as shown in Fig. \ref{simulation_result_poweralloc}(c). 
\end{example}

\section{Proposed Algorithm for Solving (P2)}
In this section, we propose an efficient algorithm to solve Problem (P2) for the general case consisting of multiple Eves and multiple Bobs, while the extension to solve (P1) will be discussed in Remark \ref{AlgorithmForhybrid}. 

\subsection{Proposed Algorithm}
First, Problem (P2) can be equivalently reformulated as
\begin{align}
    \text{(P4)}\;\; \max_{\mathbf{p}} \quad &\sum^K_{k=1} \left(R_{\rm B,k}-\max \left\{ R_{\rm E,m,k}, m \in \mathcal{M} \right\}\right) \label{P1sumrate2}\\
    \text{s.t.}\;\;\;\;  &\eqref{P1powercons}. \nonumber
\end{align}
\begin{lemma} \rm
    The solution to Problem (P4) also solves Problem (P2).
\end{lemma}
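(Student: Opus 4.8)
The plan is to show that (P2) and (P4) have the same optimal value and deduce from this that their maximizers coincide. First note that (P2) and (P4) are optimized over the same feasible set, $\{\mathbf{p}:P_{{\rm B},k}\ge 0,\ \sum_{k=1}^{K}P_{{\rm B},k}\le P_{\rm tot}\}$, and that, since $[x]^{+}\ge x$ for every real $x$, at every feasible $\mathbf{p}$ the objective of (P2) dominates that of (P4):
\[
\sum_{k=1}^{K}R_{{\rm S},k}(\mathbf{p})=\sum_{k=1}^{K}\bigl[R_{{\rm B},k}(\mathbf{p})-\max_{m\in\mathcal{M}}R_{{\rm E},m,k}(\mathbf{p})\bigr]^{+}\ \ge\ \sum_{k=1}^{K}\Bigl(R_{{\rm B},k}(\mathbf{p})-\max_{m\in\mathcal{M}}R_{{\rm E},m,k}(\mathbf{p})\Bigr).
\]
Hence the optimal value $V_{2}$ of (P2) is at least the optimal value $V_{4}$ of (P4), and it suffices to prove the reverse inequality $V_{2}\le V_{4}$: once $V_{2}=V_{4}$, any maximizer $\mathbf{p}^{\star}$ of (P4) satisfies $\sum_{k}R_{{\rm S},k}(\mathbf{p}^{\star})\ge V_{4}=V_{2}$ by the displayed inequality and $\sum_{k}R_{{\rm S},k}(\mathbf{p}^{\star})\le V_{2}$ by feasibility in (P2), so $\mathbf{p}^{\star}$ is also optimal for (P2).

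To get $V_{2}\le V_{4}$ I would start from a maximizer $\mathbf{p}^{\star}$ of (P2), split the Bobs into the ``secure'' set $\mathcal{K}_{+}=\{k:R_{{\rm B},k}(\mathbf{p}^{\star})\ge\max_{m}R_{{\rm E},m,k}(\mathbf{p}^{\star})\}$ and its complement $\mathcal{K}_{-}$, and construct a competitor $\tilde{\mathbf{p}}$ that coincides with $\mathbf{p}^{\star}$ on $\mathcal{K}_{+}$ and sets $P_{{\rm B},k}=0$ for every $k\in\mathcal{K}_{-}$; this $\tilde{\mathbf{p}}$ is clearly feasible. Since the numerators of $R_{{\rm B},k}$ and of every $R_{{\rm E},m,k}$ are proportional to $P_{{\rm B},k}$, each summand with $k\in\mathcal{K}_{-}$ becomes exactly $0$, which is $\ge R_{{\rm B},k}(\mathbf{p}^{\star})-\max_{m}R_{{\rm E},m,k}(\mathbf{p}^{\star})$, so the (P4)-objective at $\tilde{\mathbf{p}}$ equals $\sum_{k\in\mathcal{K}_{+}}\bigl(R_{{\rm B},k}(\tilde{\mathbf{p}})-\max_{m}R_{{\rm E},m,k}(\tilde{\mathbf{p}})\bigr)$, and the proof is complete once this is shown to be no smaller than $\sum_{k\in\mathcal{K}_{+}}\bigl(R_{{\rm B},k}(\mathbf{p}^{\star})-\max_{m}R_{{\rm E},m,k}(\mathbf{p}^{\star})\bigr)=\sum_{k}R_{{\rm S},k}(\mathbf{p}^{\star})=V_{2}$. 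Equivalently, one may argue directly that a maximizer of (P4) must satisfy $R_{{\rm B},k}\ge\max_{m}R_{{\rm E},m,k}$ for all $k$ --- otherwise zeroing the offending $P_{{\rm B},k}$ would strictly increase the (P4)-objective --- so that the two objectives agree at that point, which together with $V_{2}\ge V_{4}$ closes the argument.

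The step I expect to be the main obstacle is exactly the claim that the zeroing-out construction does not decrease the aggregate over $\mathcal{K}_{+}$. Removing the streams of the Bobs in $\mathcal{K}_{-}$ deletes interference terms from the denominators of \emph{both} $R_{{\rm B},k}$ and $R_{{\rm E},m,k}$ for $k\in\mathcal{K}_{+}$: the former can only raise $R_{{\rm B},k}$, which helps, but the latter can only raise $\max_{m}R_{{\rm E},m,k}$, which hurts, so an individual secure summand is not monotone under the construction and a term-by-term comparison is unavailable. I would try to close this gap using the standing assumptions of the analysis --- the high-SNR regime, the MRT beamforming, and the mixed near-/far-field geometry (so that the Bob-to-Bob coefficients $|\mathbf{a}^{H}(\theta_{{\rm B},k})\mathbf{a}(\theta_{{\rm B},i})|^{2}$ and the channel--beam correlations at the Eves can be compared quantitatively) --- and, if a residual loss persists, by reallocating the power freed from $\mathcal{K}_{-}$ among the Bobs in $\mathcal{K}_{+}$ so as to offset it. Making this trade-off precise is the crux; the rest (monotonicity of $\log_{2}(1+\cdot)$, interchanging the construction with the $\max$ over $m\in\mathcal{M}$, and verifying the power constraint) is routine bookkeeping.
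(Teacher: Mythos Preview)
Your approach---zero out the powers of the insecure Bobs and argue that the objective does not decrease---is exactly the paper's. The paper's proof is a three-line assertion: if $R_{{\rm B},k}-R_{{\rm E},k}\le 0$ at an optimum, set $P_{{\rm B},k}=0$ to ``increase its value in the objective to zero.'' It addresses only the $k$-th summand and says nothing about how removing the $k$-th stream alters the interference denominators of the remaining terms, which is precisely the obstacle you flagged in your last paragraph. So the cross-term issue you identified is not something the paper handles; your diagnosis is in fact more careful than the published argument, and the high-SNR or geometric side conditions you propose to invoke are extraneous to what the paper actually uses.

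One correction to your writeup: the ``equivalently'' sentence at the end of your second paragraph does not close the loop. Granting that every maximizer $\mathbf{p}_4^\star$ of (P4) has all terms non-negative gives $f(\mathbf{p}_4^\star)=g(\mathbf{p}_4^\star)=V_4$, and combining this with $V_2\ge V_4$ yields only $f(\mathbf{p}_4^\star)=V_4\le V_2$, which does \emph{not} show $\mathbf{p}_4^\star$ is optimal for (P2). What you need is the symmetric statement about a maximizer of (P2): if at some $\mathbf{p}_2^\star$ every term is non-negative, then $V_2=f(\mathbf{p}_2^\star)=g(\mathbf{p}_2^\star)\le V_4$, and together with $V_2\ge V_4$ you get $V_2=V_4$ and hence $f(\mathbf{p}_4^\star)=V_2$. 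But establishing that requires the same zeroing-does-not-hurt-the-other-terms step you already singled out as the crux, so the two formulations share the same unresolved step rather than one rescuing the other.
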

\begin{proof}
    Problems (P2) and (P4) have the same optimal solution due to the fact that the value of each summation term in the objective function of both problems, i.e., $R_{\rm B,k}-R_{\rm E,k}, \forall k$, must be non-negative in the optimal solution. This is because if $R_{\rm B,k}-R_{\rm E,k} \leq 0$ for any $k$, we can always increase its value in the objective to zero by setting $P_{\rm B,k}=0$ without violating the power constraints \eqref{P1powercons}, thus completing the proof.
\end{proof}

Problem (P4) is a non-convex optimization problem and is challenging to solve. To address this issue, we first introduce slack variables $\{\tilde{R}_{{\rm B},k}\}$ and $\{\tilde{R}_{{\rm E},k}\}$ such that
\begin{align}
    &\tilde{R}_{{\rm B},k} \leq R_{{\rm B},k}, k \in \mathcal{K},\label{slackRU}\\
    &\tilde{R}_{{\rm E},k} \geq R_{{\rm E},k} = \max \left\{ R_{{\rm E},m,k}, m \in \mathcal{M} \right\}, k \in \mathcal{K},\label{slackRE}
\end{align}
where constraints \eqref{slackRE} can be re-expressed as
\begin{align}
\tilde{R}_{{\rm E},k} \geq R_{{\rm E},m,k}, \forall m,k.\label{slackRE1} 
\end{align}
Then, Problem (P4) can be equivalently reformulated as
\begin{align}
    \text{(P5)}\;\; \max_{\mathbf{p}} \quad &\sum^K_{k=1} \left(\tilde{R}_{{\rm B},k}-\tilde{R}_{{\rm E},k}\right) \label{P1sumrate3}\\
    \text{s.t.}\;\;\;\;  &\eqref{P1powercons}, \eqref{slackRU}, \eqref{slackRE1}.\nonumber
\end{align}
However, Problem (P5) is still a non-convex optimization problem due to the non-convex constraints \eqref{slackRU} and \eqref{slackRE1}. To tackle this issue, we first re-writte these two constraints as
\begin{align}
    &\tilde{R}_{{\rm B},k} \leq \hat{R}_{{\rm B},k}-\check{R}_{{\rm B},k},   \forall k,\label{Ruk2log}\\
    &\tilde{R}_{{\rm E},k} \geq \hat{R}_{{\rm E},m}-\check{R}_{{\rm E},m,k}, \forall m,k, \label{Rek2log}
\end{align}
where $\hat{R}_{{\rm B},k}$, $\check{R}_{{\rm B},k}$, $\hat{R}_{{\rm E},m,k}$, and $\check{R}_{{\rm E},m,k}$ are given by 
$\hat{R}_{{\rm B},k}=\log_2(\sum^K_{i=1}\!P_{{\rm B},i}\!v^2_{{\rm B},k,i}\!+\!1)$, $\check{R}_{{\rm B},k} = \log_2(\sum^K_{i=1,i \neq k}\!P_{{\rm B},i}\!v^2_{{\rm B},k,i}\!+1)$, $\hat{R}_{{\rm E},m} \!= \!\log_2(\!\sum^K_{i=1}P_{{\rm B},i}\!v^2_{{\rm E},m,i}\!+\!1)$, and $\check{R}_{{\rm E},m,k}\!=\!\log_2(\sum^K_{i=1,i \neq k}\!P_{{\rm B},i}\!v^2_{{\rm E},m,i}\!+\!1)$. Herein, $v_{{\rm B},k,i}\triangleq\left|\mathbf{h}^H_{{\rm B},k} \mathbf{w}_{{\rm B},i}\right|/\sigma$ and $v_{{\rm E},m,i}\triangleq\left|\mathbf{h}^H_{{\rm E},m} \mathbf{w}_{{\rm B},i}\right|/\sigma$.

It is observed that $\check{R}_{{\rm B},k}$ in the constraint \eqref{Ruk2log} is a convex function, while $\hat{R}_{{\rm E},m}$ in the constraint \eqref{Rek2log} is a concave function, making this problem being non-convex. In the following, the efficient SCA technique is leveraged to approximate Problem (P5) to be a convex form. 
\begin{lemma} \label{taylor1} \rm
     For the constraint \eqref{Ruk2log}, $\check{R}_{{\rm B},k}$ is a concave function of $P_{{\rm B},i}$. For any local point $\tilde{P}_{{\rm B},i}$, $\check{R}_{{\rm B},k}$ in \eqref{Ruk2log} can be upper-bounded by
    \begin{align}
        \check{R}_{{\rm B},k} &\leq A_{k}+\sum^K_{i=1,i \neq k}B_{k,i} \times \left(P_{{\rm B},i}-\tilde{P}_{{\rm B},i}\right) \triangleq\check{R}^{\rm ub}_{{\rm B},k}, \label{etalb}
    \end{align}
    where the coefficients $A_{k}$ and $B_{k}$ are respectively given by $A_{k}=\log_2\left(\sum^K_{i=1,i \neq k}\tilde{P}_{{\rm B},i}v^2_{{\rm B},k,i}+1\right)$ and $B_{k}=\frac{v^2_{{\rm B},k,i}\log_2(e)}{\sum^K_{j=1,j \neq k}\tilde{P}_{{\rm B},j}v^2_{{\rm B},k,j}+1}$.
\end{lemma}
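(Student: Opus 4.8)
The plan is to prove the two assertions separately: first that $\check{R}_{{\rm B},k}$ is concave in the power vector $\mathbf{p}=(P_{{\rm B},1},\dots,P_{{\rm B},K})$ on the feasible set, and then that the claimed affine function $\check{R}^{\rm ub}_{{\rm B},k}$ is simply the first-order Taylor expansion of $\check{R}_{{\rm B},k}$ at the reference point $\tilde{\mathbf{p}}$, which, by concavity, is a global over-estimator.

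For the concavity, I would just unpack the definition $\check{R}_{{\rm B},k}=\log_2\!\big(\sum_{i=1,i\neq k}^{K}P_{{\rm B},i}v^2_{{\rm B},k,i}+1\big)$ and recognize it as the composition of the scalar map $t\mapsto\log_2 t$, which is concave and nondecreasing on $(0,\infty)$, with the affine map $\mathbf{p}\mapsto\sum_{i\neq k}P_{{\rm B},i}v^2_{{\rm B},k,i}+1$. Since $v^2_{{\rm B},k,i}\geq 0$ and $\mathbf{p}\geq\mathbf{0}$ on the feasible set, the argument of the logarithm is always $\geq 1>0$, so $\log_2$ is smooth there and the standard ``concave nondecreasing $\circ$ affine $=$ concave'' rule applies. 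Hence $\check{R}_{{\rm B},k}$ is concave in $\mathbf{p}$ (equivalently, jointly concave in the variables $\{P_{{\rm B},i}\}_{i\neq k}$ it actually depends on).

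For the bound, I would invoke the first-order characterization of concavity: a differentiable concave function lies below each of its tangent hyperplanes, so for any feasible $\mathbf{p}$ and any reference $\tilde{\mathbf{p}}$,
\begin{align}
\check{R}_{{\rm B},k}(\mathbf{p}) \leq \check{R}_{{\rm B},k}(\tilde{\mathbf{p}}) + \nabla\check{R}_{{\rm B},k}(\tilde{\mathbf{p}})^{T}\big(\mathbf{p}-\tilde{\mathbf{p}}\big). \nonumber
\end{align}
It then remains to identify the two pieces. Evaluating at $\tilde{\mathbf{p}}$ gives $\check{R}_{{\rm B},k}(\tilde{\mathbf{p}})=\log_2\!\big(\sum_{i\neq k}\tilde{P}_{{\rm B},i}v^2_{{\rm B},k,i}+1\big)=A_k$. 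Differentiating with respect to $P_{{\rm B},i}$ for $i\neq k$ (the partial in $P_{{\rm B},k}$ is zero, since that variable is absent from $\check{R}_{{\rm B},k}$) yields $\partial\check{R}_{{\rm B},k}/\partial P_{{\rm B},i}=\frac{v^2_{{\rm B},k,i}\log_2(e)}{\sum_{j\neq k}P_{{\rm B},j}v^2_{{\rm B},k,j}+1}$, which at $\mathbf{p}=\tilde{\mathbf{p}}$ equals $B_{k,i}$. Substituting these into the tangent-plane inequality gives exactly $\check{R}_{{\rm B},k}\leq A_k+\sum_{i=1,i\neq k}^{K}B_{k,i}\big(P_{{\rm B},i}-\tilde{P}_{{\rm B},i}\big)=\check{R}^{\rm ub}_{{\rm B},k}$.

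There is no deep obstacle here: the result is routine convex-analysis boilerplate, and the only points that warrant a sentence of care are (i) noting that the argument of $\log_2$ stays strictly positive on $\{\mathbf{p}\geq\mathbf{0}\}$ so the composition rule and the differentiation are legitimate, and (ii) being careful with the indexing, i.e., that the linearization runs only over the $K-1$ variables $\{P_{{\rm B},i}\}_{i\neq k}$ appearing in $\check{R}_{{\rm B},k}$, which is why the sum in the bound is over $i\neq k$ and the slope is written $B_{k,i}$. I would also remark that the very same device will be reused to handle the concave term $\hat{R}_{{\rm E},m}$ in \eqref{Rek2log}, so stating it cleanly here streamlines the subsequent SCA development.
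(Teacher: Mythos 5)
Your proof is correct and is exactly the argument the paper relies on (the paper states this lemma without an explicit proof): $\check{R}_{{\rm B},k}=\log_2\bigl(\sum_{i\neq k}P_{{\rm B},i}v^2_{{\rm B},k,i}+1\bigr)$ is concave as a nondecreasing concave function of an affine map with nonnegative coefficients, and the bound is its first-order Taylor expansion at $\tilde{\mathbf{p}}$, which globally over-estimates a concave function. Your identification of $A_k$ as the function value and $B_{k,i}$ as the partial derivatives at the local point is exactly right, so nothing is missing.
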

\begin{lemma} \label{taylor2} \rm
 For the constraint \eqref{Rek2log}, $\hat{R}_{{\rm E},m}$ is a concave function of $P_{{\rm B},i}$. For any local point $\tilde{P}_{{\rm B},i}$,  $\hat{R}_{{\rm E},m}$ in \eqref{Rek2log} can be upper-bounded by
    \begin{align}
        \hat{R}_{{\rm E},m} &\leq C_{m}+\sum^K_{i=1}D_{m,i} \times (P_{{\rm B},i}-\tilde{P}_{{\rm B},i}) \triangleq \hat{R}^{\rm ub}_{{\rm E},m},  \label{gammalb}
    \end{align}
    where the coefficients $C_{m}$ and $D_{m,i}$ are given by $C_{m}=\log_2\left(\sum^K_{i=1}\tilde{P}_{{\rm B},i}v^2_{{\rm E},m,i}+1\right)$ and $D_{m,i}=\frac{v^2_{{\rm E},m,i}\log_2(e)}{\sum^K_{j=1}\tilde{P}_{{\rm B},j}v^2_{{\rm E},m,j}+1}$.
\end{lemma}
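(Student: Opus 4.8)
<br>

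The statement to prove is \textbf{Lemma~\ref{taylor2}}, which asserts that $\hat{R}_{{\rm E},m} = \log_2(\sum_{i=1}^K P_{{\rm B},i} v^2_{{\rm E},m,i} + 1)$ is concave in the power vector $\mathbf{p}$ and admits the first-order Taylor upper bound in \eqref{gammalb}. The plan is to proceed in two short steps, since Lemma~\ref{taylor1} has already established essentially the same pattern for $\check{R}_{{\rm B},k}$.

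First I would verify concavity. The function $\hat{R}_{{\rm E},m}$ is the composition of the scalar concave function $t\mapsto \log_2(t+1)$ with the affine map $\mathbf{p}\mapsto \sum_{i=1}^K P_{{\rm B},i}v^2_{{\rm E},m,i}$ (whose coefficients $v^2_{{\rm E},m,i}\ge 0$ are constants determined by the channels and analog beams). Since a concave function composed with an affine map is concave, $\hat{R}_{{\rm E},m}$ is concave on the feasible domain $\{\mathbf{p}\ge \mathbf{0}\}$. Equivalently, one can note the Hessian is $-\tfrac{\log_2(e)}{(\sum_i P_{{\rm B},i}v^2_{{\rm E},m,i}+1)^2}\,\mathbf{v}\mathbf{v}^\top\preceq 0$ with $\mathbf{v}=(v^2_{{\rm E},m,1},\dots,v^2_{{\rm E},m,K})^\top$.

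Second, I would invoke the standard global-overestimator property of concave functions: for any differentiable concave $f$ and any point $\tilde{\mathbf{p}}$, $f(\mathbf{p})\le f(\tilde{\mathbf{p}})+\nabla f(\tilde{\mathbf{p}})^\top(\mathbf{p}-\tilde{\mathbf{p}})$. Applying this at the local point $\tilde{P}_{{\rm B},i}$ and computing the partial derivatives $\frac{\partial \hat{R}_{{\rm E},m}}{\partial P_{{\rm B},i}}\big|_{\tilde{\mathbf p}} = \frac{v^2_{{\rm E},m,i}\log_2(e)}{\sum_{j=1}^K \tilde{P}_{{\rm B},j}v^2_{{\rm E},m,j}+1}=D_{m,i}$, together with $\hat{R}_{{\rm E},m}(\tilde{\mathbf p})=\log_2(\sum_{i=1}^K \tilde P_{{\rm B},i}v^2_{{\rm E},m,i}+1)=C_m$, yields exactly \eqref{gammalb}. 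I would flag that the sums here run over all $i\in\{1,\dots,K\}$ (no $i\neq k$ exclusion), which is the only structural difference from Lemma~\ref{taylor1} and accounts for the absence of a $k$-index on $C_m$ and $D_{m,i}$.

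There is essentially no serious obstacle here; the result is routine convex-analysis bookkeeping, and the only care needed is to keep the index ranges and the chain-rule constant $\log_2(e)$ straight, and to confirm that the bound is valid for all feasible $\mathbf{p}$ (which holds because concavity is global on the nonnegative orthant). Given the near-identical structure, I would simply state ``the proof parallels that of Lemma~\ref{taylor1}'' and record the gradient computation, rather than reproving the global-overestimator inequality from scratch.
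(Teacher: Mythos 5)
Your proof is correct and is exactly the standard argument the paper relies on (the paper states this lemma without an explicit proof): $\hat{R}_{{\rm E},m}$ is a concave function of $\mathbf{p}$ as the composition of $\log_2(\cdot+1)$ with a nonnegative affine map, so its first-order Taylor expansion at $\tilde{\mathbf p}$ is a global overestimator, and your gradient computation reproduces $C_m$ and $D_{m,i}$ precisely. Nothing further is needed.
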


According to \textbf{Lemmas \ref{taylor1}} and \textbf{\ref{taylor2}}, by substituting $\check{R}^{\rm ub}_{{\rm B},k}$ in \eqref{etalb} and $\hat{R}^{\rm ub}_{{\rm E},m}$ in \eqref{gammalb} with their corresponding upper bounds, Problem (P5) can be transformed into the following approximate form
\vspace{-5pt}
\begin{align}
    \text{(P6)}\;\max_{\mathbf{p}} \quad &\sum^K_{k=1}\!\left(\tilde{R}_{\rm B,k}\!-\!\tilde{R}_{\rm E,k}\right) \label{P1sumrate4}\\
    \text{s.t.}\;\;\;\;\; &\tilde{R}_{\rm B,k} \leq \hat{R}_{{\rm B},k}\!-\!\check{R}^{\rm ub}_{{\rm B},k}, k \in \mathcal{K},\label{slackRU4}\\
    &\tilde{R}_{\rm E,k}\!\geq\!\hat{R}^{\rm ub}_{{\rm E},m}\!-\!\check{R}_{{\rm E},m,k}, k \!\in\! \mathcal{K}, m \!\in\! \mathcal{M},\\
    &\eqref{P1powercons}.\nonumber
\end{align}
Problem (P6) is a convex optimization problem, which can be efficiently solved by using the standard CVX solvers. As such, the objective value of (P2) can be obtained by solving the convex Problem (P6) iteratively. 

 \begin{remark} \label{AlgorithmForhybrid}\rm
	The proposed algorithm for solving Problem (P2) can also be easily extended to solve Problem (P1). 
	Specifically, we first define the effective channel for Bob $k$ and Eve $m$ as $(\mathbf{h}^{\rm eff}_{{\rm B},k})^H = \mathbf{h}^H_{{\rm B},k}\mathbf{F}_{\rm A}$ and $(\mathbf{h}^{\rm eff}_{{\rm E},k})^H = \mathbf{h}^H_{{\rm E},m}\mathbf{F}_{\rm A}$, respectively. Then, given the effective channels under the MRT-based analog beamforming matrix, Problem (P1) reduces to a digital beamforming design under a total power constraint.
	To address the non-convexity, we first introduce auxiliary variables to reformulate the problem into a more tractable form, and then apply first-order Taylor expansion to approximate the non-convex constraints by convex ones.
	The resulting problem is iteratively solved using the SCA technque\cite{Sohrabi2016}.
\end{remark}

 \subsection{Algorithm Convergence and Complexity Analysis}
 The convergence and complexity analysis for the proposed algorithm is as follows. First, for the solution to Problem (P6) obtained in each iteration, the objective value is no smaller than that in the previous iteration. Thus, the objective value of Problem (P6) is non-decreasing over iterations. Besides, since the objective value has a finite upper bound, the proposed algorithm is guaranteed to converge. Next, the overall algorithm complexity can be characterized as $\mathcal{O}(IK^{3.5})$, where $K$ is the total number of optimization variables (i.e., the number of Bobs), $I$ denotes the number of iterations to achieve convergence. While the complexity of the proposed algorithm in \textbf{Remark \ref{AlgorithmForhybrid}} for Problem (P1) is $\mathcal{O}(IK^{7})$. This indicates that the computational complexity of the proposed algorithm under the low-complexity hybrid architecture is significantly lower than that of the algorithm presented in \textbf{Remark \ref{AlgorithmForhybrid}}.

\begin{figure*}[t]
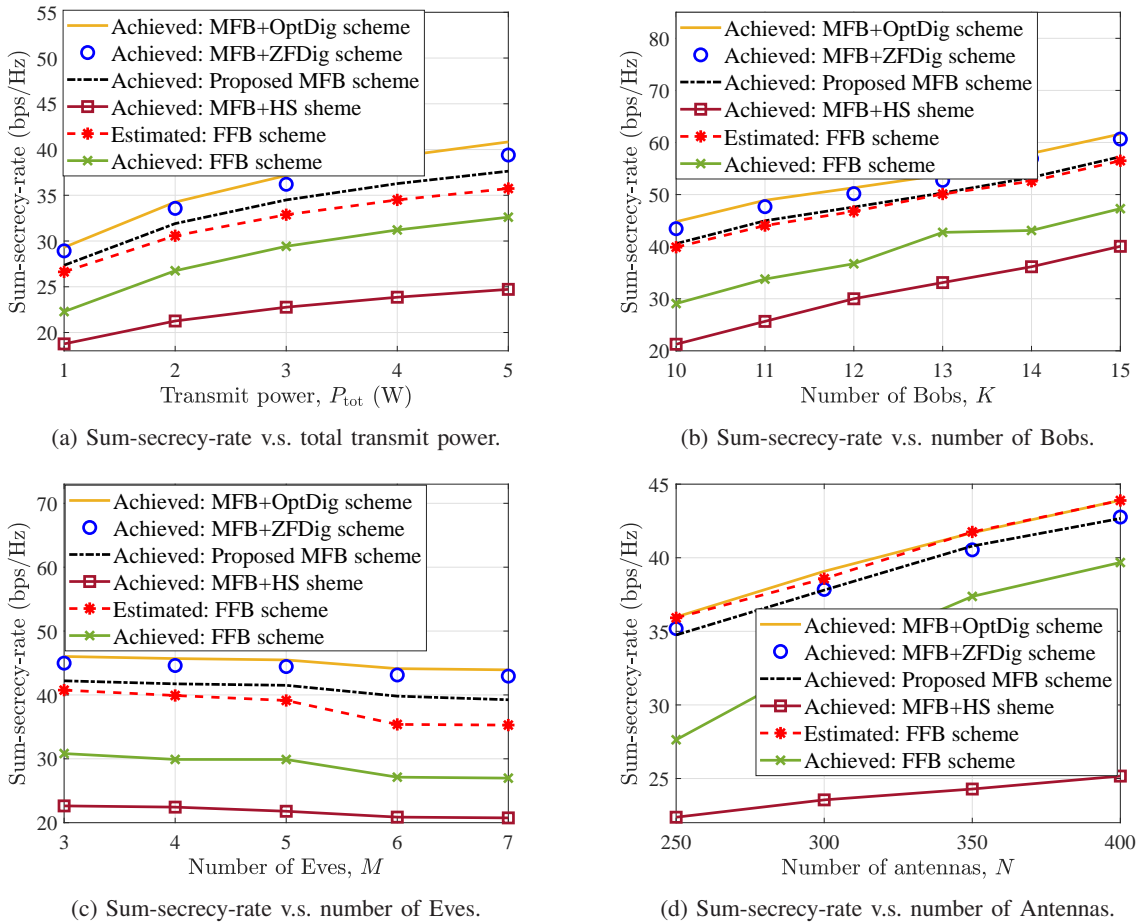

	\centering
	\begin{subfigure}[b]{0.49\linewidth}
		\centering
		\includegraphics[width=3in]{secrecy_rate_vs_Ptot_oneEve_trans.eps}%{performance_compare_new.eps}
		\caption{Sum-secrecy-rate v.s. total transmit power.}
		\label{seccrecypower_vs_totpower}
	\end{subfigure}
	\hspace{-1cm}
	\begin{subfigure}[b]{0.49\linewidth}
		\centering
		\includegraphics[width=3in]{SumSecrecyRate_vs_BobNumber_trans.eps}
		\caption{Sum-secrecy-rate v.s. number of Bobs.}
		\label{seccrecypower_vs_diffnumbob}
	\end{subfigure}
	%\hspace{0.1cm}
	\begin{subfigure}[b]{0.49\linewidth}
		\centering
		\includegraphics[width=3in]{SumSecrecyRate_vs_EveNumber_trans.eps}
		\caption{Sum-secrecy-rate  v.s. number of Eves.}
		\label{seccrecypower_vs_diffnumEve}
	\end{subfigure}
	\hspace{-1cm}
	\begin{subfigure}[b]{0.49\linewidth}
		\centering
		\includegraphics[width=3in]{SumSecrecyRate_vs_AntennaNumber_trans.eps}
		\caption{Sum-secrecy-rate  v.s. number of Antennas.}
		\label{seccrecypower_vs_diffnumantenna}
	\end{subfigure}
	\caption{Sum-secrecy-rate v.s. system parameters.}
	\label{secrecy_numerical}
		\vspace{-7pt}
\end{figure*}

\section{Numerical Results}
In this section, we present numerical results to demonstrate the performance gains of our proposed algorithm. We consider a mixed-field PLS system, where a BS with $N=256$ antennas operates at $f=30$ GHz, for which the effective Rayleigh distance of 120 meters at a spatial angle of $\theta=0$ rad. We set $\beta = (\lambda/4\pi)^2 = -62$ dB, $\sigma^2=-80$ dBm, and model all channels based on Rician fading with a Rician factor of 10 dB. The convergence threshold is $\epsilon=10^{-5}$.

For performance comparison, we consider four benchmark schemes as follows
\begin{itemize}
	\item  Far-field based (FFB) scheme, which designs the power allocation for Bobs under the assumption of a purely far-field system.
    \item Mixed-field based scheme with heuristic scheduling (MFB+HS), where all the Bobs satisfying the secure-transmission condition given in \eqref{SecuCond} are scheduled with optimized power allocation.
    \item Mixed-field based scheme + optimal digital beamforming (MFB+OptDig), which optimize the digital beamforming following the algorithm in \textbf{Remark \ref{AlgorithmForhybrid}} under the MRT-based analog beamforming.
    \item Mixed-field based scheme + ZF-based digital beamforming (MFB+ZFDig), which adopts the ZF-based digital beamforming given the MRT-based analog beamforming.
\end{itemize}
     In addition, our proposed algorithm is named as mixed-field based (MFB) scheme. Note that the sum-secrecy-rates calculated from the corresponding algorithms are named as \emph{estimated} rates. For fair comparison, we consider another performance metric, called \emph{achieved} rate, which is obtained under the precise mixed-field system setting.

\subsection{Effect of Total Transmit Power}
In Fig. \ref{secrecy_numerical}(a), we plot the sum-secrecy-rate obtained by different power allocation schemes v.s. total transmit power, where an Eve is located at (0.03 rad, 7 m), and 5 Bobs are located at (0.47 rad, 155 m), (-0.25 rad, 287 m), (0.09 rad, 142 m), (0.03 rad, 208 m), (-0.66 rad, 222 m), respectively.
First, it is observed that the sum-secrecy-rate obtained by all schemes monotonically increases with the total transmit power. 
Second, the proposed MFB scheme only suffers a small performance loss as compared to the MFB+OptDig scheme, especially when $P_{\rm tot}$ is small. Besides, the proposed MFB scheme achieves close
performance with that of the MFB+ZFDig scheme.
Third, it is observed that the achievable sum-secrecy-rate of the FFB scheme is significantly smaller than its estimated value due to channel mismatch, thus necessitating accurate channel modeling to enable efficient power allocation designs tailored to mixed near-field and far-field scenarios.
Besides, the proposed MFB scheme achieves a larger sum-secrecy-rate than the FFB scheme due to the more effective power allocation, which accounts for the impact of mixed-field beam-channel correlation.  
This further highlights the importance of considering practical mixed-field channel model for PLS systems. 
Last, the proposed MFB scheme achieves a significant rate performance gain over the MFB+HS scheme. This is because, in the multi-Bob system, the interferences from other Bobs can be effectively leveraged to mitigate the capability of Eve to intercept signals of target Bobs. As such, the data transmission links that were originally insecure may become secure, thereby leading to enhanced sum-secrecy-rate. 

\subsection{Effect of Number of Bobs} \label{NumericalBob}
Next, we show in Fig. \ref{secrecy_numerical}(b) the effect of the numbers of Bobs on the sum-secrecy-rate obtained by different schemes with $M=3$, where the Eves are randomly located in an area with a radius in $[0.05Z, 0.1Z]$ ($Z$ being the effective Rayleigh distance at a spatial angle of 0 rad) and a spatial angle in $[-\frac{\pi}{3},\frac{\pi}{3}]$, and the Bobs are randomly distributed in an area with a radius in $[Z, 2Z]$ and a spatial angle in $[-\frac{\pi}{3},\frac{\pi}{3}]$. 
First, it is observed that the sum-secrecy-rates achieved by all schemes monotonically increase with the number of Bobs.
This is due to the increasing cumulative interference to Eve from the growing number of Bobs to Eve, which enhances the suppression of  the eavesdropping capability of the Eve and thereby improves the sum-secrecy rate of PLS systems.
Besides, the proposed scheme achieves a substantial performance gain over the MFB+HS scheme. This is because the MFB+HS scheme fails to leverage interferences from Bobs to Eves for mitigating the eavesdropping capability of Eves, resulting in degraded rate performance loss.

\subsection{Effect of Number of Eves}
In Fig. \ref{secrecy_numerical}(c), we evaluate the sum-secrecy-rate performance of the proposed scheme with $K=10$ under different numbers of Eves, where multiple Eves and Bobs are randomly distributed in the same areas as in Subsection \ref{NumericalBob}. First, it is observed that the achieved sum-secrecy-rate obtained by our proposed MFB scheme outperforms that of the benchmark schemes. Second, we observe that the sum-secrecy-rate obtained by all schemes decreases with the number of Eves. This is expected, as for each Bob, the eavesdropping rate generally increases with the number of Eves,  thereby resulting in decreased achievable sum-secrecy-rate. 

\subsection{Effect of Number of Antennas}
Last, in Fig. \ref{secrecy_numerical}(d), we plot the sum-secrecy-rate obtained by different schemes under different numbers of antennas with $M=3$ and $K=13$. First, it is observed that the proposed MFB scheme achieves a larger sum-secrecy-rate than benchmark schemes under different numbers of antennas. 
Note that the achieved sum-secrecy-rate obtained by the proposed MFB scheme is smaller than the estimited one obtained by the FFB scheme. This is intuitive, as a fundamental trade-off exists between two effects in mixed-field systems: As the number of antennas increases, the energy-spread effect becomes more severe, leading to increased information leakage from a greater number of Bobs to Eve, while it also results in increased interference from these Bobs toward Eve, thereby degrading its ability to intercept the target information.
Second, it is observed that the achieved sum-secrecy-rate obtained by the proposed MFB scheme and the FFB scheme increases significantly with the number of antennas. This improvement is attributed to the enhanced array gain for data transmissions to Bobs. In addition, the achieved sum-secrecy-rates obtained by the MFB+HS scheme increases slightly with the number of antennas. This is because, although the array gain for information transmission increases, the energy-spread effect also becomes more severe, which makes more Bobs fail to satisfy the secure-transmission conditions in one-Bob systems (i.e., no power is allocated to these Bobs in this shemes), resulting in limited performance improvement. This further highlights the effectiveness of leveraging interference from multiple Bobs to mitigate eavesdropping capability of Eves, thereby leading to performance rate improvement.

\section{Conclusions}
In this paper, we considered a new mixed-field PLS system where multiple Eves are located in the near-field of the XL-array BS and the Bobs were located in the far-field. An optimization problem was formulated to maximize the sum-secrecy-rate of all Bobs by designing efficient power allocation. To obtain useful insights, we first considered a one-Bob-one-Eve system and characterized the insecure-transmission region of Bob in closed form given the location of Eve. Furthermore, we extended the analysis to a two-Bob-one-Eve system and obtained an approximate solution for power allocation at the BS. Then, for the general case, we proposed an efficient algorithm to obtain a high-quality solution by leveraging the SCA technique. Last, numerical results were presented to verify the effectiveness of the proposed algorithm and the effect of mixed-field system setting on the rate performance.

\appendix
\subsection{Proof for Proposition \ref{SecuAnguRegi}} \label{insecure_thetaB_appendix}
Based on \textbf{Remark~\ref{propertyofcorre}}, given the location of Eve 
$(\theta_{\rm E}, r_{\rm E})$  (and hence a fixed $\beta_2$),  there exist two solutions for $\beta_1$, namely $\beta_1^{-}(\Lambda) < 0$ 
and $\beta_1^{+}(\Lambda) > 0$, such that $G(\beta_1, \beta_2) = \Lambda$. 
According to the secure-transmission condition in \eqref{SecuCond}, the data transmissions to Bob are insecure when $\eta=\left|\mathbf{b}^{H}\left(\theta_{\rm E},r_{\rm E}\right)\mathbf{a}\left(\theta_{\rm B}\right)\right| \geq r_{\rm E}/r_{\rm B}$, which is equivalent to the condition 
$G(\beta_1, \beta_2) \geq \Lambda$ with $\Lambda \triangleq r_{\rm E}/r_{\rm B}$. This condition is satisfied when $\beta_1^{-}(\Lambda) \leq \beta_1 \leq \beta_1^{+}(\Lambda)$.  Then, based on the definition of $\beta_1$ in \textbf{Lemma \ref{correlation}}, by setting $\beta_1 = \beta_1^{+}(\Delta) = \beta_1^{-}(\Delta)$, the insecure-transmission angular region can be obtained,  which is given in \eqref{insecure_thetaB}.
Besides, as the Bob is located in the far-field region, we have $r_{\rm B} > Z$, where $Z$ is the effective Rayleigh distance, thereby completing the proof. 

\subsection{Proof for lemma \ref{securecond_multiuser1}} \label{proof_resecure_condition}
According to the secrecy rate of Bob 1 given in \eqref{TwoBobRs1}, the data transmissions to Bob 1 are secure when
\begin{align}
    &\frac{N P_{{\rm B},1}h^2_{{\rm B},1}}{NP_{{\rm B},2}h^2_{{\rm B},1}\left|\mathbf{a}^H(\theta_{{\rm B},1})\mathbf{a}(\theta_{{\rm B},2})\right|^2+\sigma^2} \nonumber\\&\qquad\qquad> \frac{N P_{{\rm B},1}h^2_{{\rm E}}\left|\mathbf{b}^H(\theta_{{\rm E}},r_{{\rm E}})\mathbf{a}(\theta_{{\rm B},1})\right|^2}{NP_{{\rm B},2}h^2_{{\rm E}}\left|\mathbf{b}^H(\theta_{{\rm E}},r_{{\rm E}})\mathbf{a}(\theta_{{\rm B},2})\right|^2+\sigma^2}. \label{inequation}
\end{align}
By dividing both sides of \eqref{inequation} by $\frac{N P_{{\rm B},1}h^2_{{\rm E}}}{NP_{{\rm B},2}h^2_{{\rm E}}\left|\mathbf{b}^H(\theta_{{\rm E}},r_{{\rm E}})\mathbf{a}(\theta_{{\rm B},2})\right|^2+\sigma^2}$, we obtain
\begin{align}
        &\!\left|\mathbf{b}^H\left(\theta_{\rm E},r_{\rm E}\right)\mathbf{a}\left(\theta_{{\rm B},1}\right)\right|^2\!<\!\frac{r^2_{\rm E}\!+\!\frac{\beta NP_{{\rm B},2}}{\sigma^2}\!\left|\mathbf{b}^H\(\theta_{\rm E},r_{\rm E}\)\mathbf{a}\(\theta_{{\rm B},2}\)\right|^2}{r^2_{{\rm B},1}\!+\!\frac{\beta N P_{{\rm B},2}}{\sigma^2}\!\left|\mathbf{a}^H\(\theta_{{\rm B},1}\)\mathbf{a}\(\theta_{{\rm B},2}\)\right|^2}\!. 
    \end{align}
When $\theta_{{\rm B},1} \neq \theta_{{\rm B},2}$, the Bob 2 $\rightarrow$ Bob 1 interference is negligible, i.e., $\left|\mathbf{a}^H\left(\theta_{{\rm B},1}\right)\mathbf{a}\left(\theta_{{\rm B},2}\right)\right|^2=0$, thus resulting in the secure-transmission condition in \eqref{seccon_for_multiuser}. When $\theta_{{\rm B},1} = \theta_{{\rm B},2}$, we have $\left|\mathbf{a}^H\left(\theta_{{\rm B},1}\right)\mathbf{a}\left(\theta_{{\rm B},2}\right)\right|^2=1$, thus resulting in the secure-transmission condition in \eqref{seccon_for_multiuser}.

\subsection{Proof for Proposition \ref{optimal_poweralloc}} \label{solutioncaseIII}
(1) Proof for Case I:

In this case, only the data transmission to Bob 1 is insecure, meaning that its secrecy rate remains zero regardless of the power allocated to it. As such, all power should be allocated to Bob 2.

(2) Proof for Case II:

When $\left|\!\mathbf{b}^H\!\left(\!\theta_{\rm E}, r_{\rm E}\!\right)\!\mathbf{a}\!\left(\!\theta_{\rm B,2}\!\right)\!\right|\!\rightarrow\!0$ and $\left|\!\mathbf{b}^H\!\left(\!\theta_{\rm E}, r_{\rm E}\!\right)\!\mathbf{a}\!\left(\!\theta_{\rm B,1}\!\right)\!\right|\!\rightarrow\!0$, no data is intercepted by Eve. The optimal solution for power allocation to Problem (P3) follows conventional waterfilling algorithm. 

In high-SNR regimes, when $\left|\!\mathbf{b}^H\!\left(\!\theta_{\rm E}, r_{\rm E}\!\right)\!\mathbf{a}\!\left(\!\theta_{\rm B,2}\!\right)\!\right|$ and $\left|\!\mathbf{b}^H\!\left(\!\theta_{\rm E}, r_{\rm E}\!\right)\!\mathbf{a}\!\left(\!\theta_{\rm B,1}\!\right)\!\right|$ are sufficiently large such that the noise power at Eve can be neglected, the secrecy-rate of the two Bobs can be approximated as
\begin{align}
	R_{\rm S,1} &\!\approx\! \log_2\!\left(\frac{P_{\rm B,1}g^2_{1,1}}{\sigma^2}\right)\!-\!\log_2\!\left(1+\frac{P_{\rm B,1}g^2_{{\rm E},1}}{P_{\rm B,2}g^2_{{\rm E},2}}\right)\!,\label{R1caseIII}\\
	R_{\rm S,2} &\!\approx \!\log_2\!\left(\frac{P_{\rm B,2}g^2_{2,2}}{\sigma^2}\right)\! -\!\log_2\!\left(1+\frac{P_{\rm B,2}g^2_{{\rm E},2}}{P_{\rm B,1}g^2_{{\rm E},1}}\right)\!,\label{R2caseIII}
\end{align}
where $g_{i,j}=\left|\mathbf{h}^H_{\rm B,i} \mathbf{w}_{\rm B,j}\right|$, and $g_{{\rm E},j}=\left|{\mathbf{h}^H_{\rm E}} \mathbf{w}_{\rm B,j}\right|$, $i,j \in \mathcal{K}$.

Then, the sum-secrecy-rate can be rewritten as
\begin{align}
	R_{\rm S,sum}=R_{\rm S,1}+R_{\rm S,2}
	=\log_2\left(A/B\right),
\end{align}
where $A$ and $B$ are given by
\begin{align}
	&A=\left(P^4_{\rm B,1} -2P_{\rm tot}P^3_{\rm B,1} + P^2_{\rm tot}P^2_{\rm B,1} \right)g^2_{1,1}g^2_{2,2}g^2_{{\rm E},1}g^2_{{\rm E},2},\\
	&B=\bigl(P^2_{\rm B,1}\left(g^4_{{\rm E},1}-2g^2_{{\rm E},1}g^2_{{\rm E},2}+g^4_{{\rm E},2}\right) \nonumber\\& \quad +P_{\rm B,1}\left(2P_{\rm tot}g^2_{{\rm E},1}g^2_{{\rm E},2} -2P_{\rm tot}g^4_{{\rm E},2}\right) +P^2_{\rm tot}g^4_{{\rm E},2}\bigr)\sigma^4,
\end{align}
with $P_{\rm B,2}=P_{\rm tot}-P_{\rm B,1}$.

To maximize $R_{\rm S,sum}$, we first obtain the first-order derivative of $\alpha=A/B$ w.r.t. $P_{{\rm B},1}$, given by
\begin{equation}
	\label{eq:xx}
	\frac{\partial \alpha}{\partial P_{{\rm B},1}} = \left(\frac{\partial A}{\partial P_{{\rm B},1}}B - \frac{\partial B}{\partial P_{{\rm B},1}}A\right)/B^2.
\end{equation} 
It is observed that the monotonicity of \eqref{eq:xx} only relies on its numerator, which is factorized into the following form
\begin{align}
	\small
	&f(P_{\rm B,1})=\xi
	P_{\rm B,1}\left(P_{\rm B,1}-P_{\rm tot}\right)\left(P_{\rm B,1}-\frac{P_{\rm tot}g^2_{{\rm E},2}}{g^2_{{\rm E},2}-g^2_{{\rm E},1}}\right)\nonumber\\ &\;\;\;\;\;\times \left(  P_{\rm B,1}-\frac{P_{\rm tot}g_{{\rm E},2}}{g_{{\rm E},2}-g_{{\rm E},1}} \right)\left(P_{\rm B,1}-\frac{P_{\rm tot}g^2_{{\rm E},2}}{g^2_{{\rm E},1}+g^2_{{\rm E},2}}\right),
\end{align}
where $\xi$ is given by $\xi=1/\left(g^2_{{\rm E},2}-g^2_{{\rm E},1}\right)^2$. If $P_{\rm B,1}=P_{\rm tot}g^2_{{\rm E},2}/\left(g^2_{{\rm E},2}-g^2_{{\rm E},1}\right)$ or $P_{\rm B,1}=P_{\rm tot}g_{{\rm E},2}/\left(g_{{\rm E},2}-g_{{\rm E},1}\right)$, the power allocated to Bob 1 falls outside the feasible range $[0,P_{\rm tot}]$, which is not physically reasonable. However, when $P_{\rm B,1}<P_{\rm tot}g^2_{{\rm E},2}/\left(g^2_{{\rm E},1}+g^2_{{\rm E},2}\right)$, the function $\alpha$ is monotonically increasing with $P_{\rm B,1}$, and when $P_{\rm B,1}>P_{\rm tot}g^2_{{\rm E},2}/\left(g^2_{{\rm E},1}+g^2_{{\rm E},2}\right)$, the function $\alpha$ is monotonically decreasing within the region $P_{\rm B,1}\in [0,P_{\rm tot}]$, and $P_{\rm tot}g^2_{{\rm E},2}/\left(g^2_{{\rm E},1}+g^2_{{\rm E},2}\right) \geq 0$ always holds. Therefore, the optimal value that maximize secrecy rate is $P_{\rm B,1}=P_{\rm tot}g^2_{{\rm E},2}/\left(g^2_{{\rm E},1}+g^2_{{\rm E},2}\right)$ and $P_{\rm B,2}=P_{\rm tot}-P_{\rm B,1}$, thus completing the proof.

\bibliographystyle{IEEEtran}
\bibliography{Refs}

\end{document}